\documentclass[aps,prl,reprint,superscriptaddress,longbibliography, twocolumn, nofootinbib]{revtex4-2}
\usepackage[T1]{fontenc}
\usepackage[utf8]{inputenc}
\usepackage{lmodern}
\usepackage{microtype}
\usepackage{mathtools}
\usepackage{amsmath,amssymb,amsthm,amsfonts}
\usepackage{bm}
\usepackage{bbm}
\usepackage{overpic}
\usepackage[colorlinks = true, 
citecolor = blue,
linkcolor = blue,
urlcolor = blue]{hyperref}
\usepackage{tikz} 
\usetikzlibrary{quantikz2} 
\usepackage[table]{xcolor}
\usepackage{tikz-cd}
\usepackage{ytableau}
\usepackage{graphicx}

\usetikzlibrary{calc,backgrounds}
\usetikzlibrary{arrows.meta}

\DeclareMathOperator{\tr}{Tr}

\newcommand{\ketbra}[2]{{\vert #1 \rangle \langle #2 \vert}}

\newtheorem{theorem}{Theorem}

\allowdisplaybreaks[1]

\begin{document}

\title{Optimal copy complexity of quantum state cloning}

\author{Sangwoo Jeon}
\email{sangw077@gmail.com}
\affiliation{Department of Physics, Korea Advanced Institute of Science and Technology, Daejeon 34141, Korea}

\author{Vaughn Sohn}
\affiliation{Department of Physics, Korea Advanced Institute of Science and Technology, Daejeon 34141, Korea}
\affiliation{Department of Electronic Engineering, Korea Advanced Institute of Science and Technology, Daejeon 34141, Korea}

\author{Changhun Oh}
\email{changhun0218@gmail.com}
\affiliation{Department of Physics, Korea Advanced Institute of Science and Technology, Daejeon 34141, Korea}

\date{\today}

\begin{abstract}
Quantum state cloning is the task of approximately producing additional copies of an unknown quantum state from a finite number of input copies. The optimal cloning fidelity is known exactly for pure states, but no comparable characterization is known for general mixed states. We determine the optimal asymptotic copy complexity for $d$-dimensional states of rank at most $r$: producing $M$ additional copies with worst-case fidelity at least $1-\varepsilon$ requires and is achievable with $N=\Theta(Mrd/\varepsilon)$ input copies. Remarkably, the lower bound already holds for a family of states with a fixed flat spectrum, while the matching upper bound is achieved by random purification followed by optimal pure-state cloning. For $M=1$, we further show that high-fidelity tomography can be coherently converted into cloning with comparable error, revealing an operational origin of the matching cloning and tomography complexities.
\end{abstract}

\maketitle

\textit{Introduction---}
The no-cloning theorem is one of the most fundamental no-go principles in quantum information theory: an unknown quantum state cannot be copied perfectly~\cite{wootters1982single, dieks1982communication}.
Its fundamental role in quantum information theory is reflected in a wide range of settings.
For instance, the security of quantum key distribution, such as the BB84 protocol~\cite{bennett2014quantum}, is intrinsically rooted in this theorem~\cite{woodhead2013quantum}; unlike classical information, an unknown quantum key cannot be copied and stored for independent measurements by an eavesdropper.
Similarly, the ability to clone an arbitrary state would directly violate the uncertainty principle by allowing simultaneous measurements of noncommuting observables.
More recently, unclonability has also become a useful primitive in quantum cryptography, appearing in settings such as quantum money, copy-protection, and related cryptographic tasks~\cite{aaronson2009quantum, aaronson2012quantum, fefferman2025hardness}.

Yet, the impossibility of perfect cloning does not rule out the possibility of \textit{approximate} cloning of quantum states~\cite{buvzek1996quantum}.
For example, although a single unknown qubit state $\ket{\psi}$ cannot be transformed perfectly into $\ket{\psi}\otimes\ket{\psi}$, one can produce an approximate two-copy state with optimal global fidelity $2/3$~\cite{gisin1997optimal}.
The pertinent question is therefore not whether cloning is possible at all, but how well it can be done.
Quantum state cloning formalizes this question as the task of approximately implementing the transformation
\begin{align}
    \rho^{\otimes N} \xrightarrow{\mathrm{approx.}} \rho^{\otimes (N+M)},
\end{align}
given $N$ copies of an unknown state $\rho$.
This problem has been extensively studied for decades~\cite{scarani2005quantum, fan2014quantum}.

Despite this long history, the optimal fidelity of quantum state cloning for general mixed states has, somewhat surprisingly, remained unresolved.
Beyond its foundational significance, this question has recently gained renewed importance in quantum cryptography, where quantitative cloning bounds are closely related to the security of unclonability-based primitives~\cite{fefferman2025hardness}.
Most progress so far has been made in two related but more restricted directions.
The first is the cloning of pure states~\cite{gisin1997optimal, werner1998optimal, bruss1998optimal, cirac1999optimal}, for which the optimal fidelity is known in closed analytic form.
The second is quantum state broadcasting, where the goal is to generate a correlated multipartite state whose individual marginals reproduce the original mixed state~\cite{barnum1996noncommuting, barnum2007generalized, d2005superbroadcasting, dang2007optimal, parzygnat2024virtual}.
Broadcasting, however, is fundamentally different from cloning: it only constrains the marginals, whereas cloning requires the entire output state to approximate the product state $\rho^{\otimes (N+M)}$.

In this work, we resolve this long-standing problem by determining the optimal copy complexity of $N \to N+M$ quantum state cloning for general mixed states in the asymptotic regime.
We consider $d$-dimensional states of rank at most $r$ and ask how many input copies $N$ are necessary and sufficient to produce $M$ additional copies.

More precisely, using Uhlmann fidelity $F(\rho,\sigma)\coloneqq \left(\tr\sqrt{\sqrt{\rho}\sigma\sqrt{\rho}}\right)^2$ as the figure of merit, we ask the following question: 
how many copies $N$ are necessary and sufficient so that there exists a cloning map $\mathcal{T}$ satisfying 
\begin{align}
    \inf_\rho F(\rho^{\otimes (N+M)},\mathcal{T}(\rho^{\otimes N}))\geq1-\varepsilon?
\end{align}
Throughout the work, we derive a lower bound on $N$ using representation-theoretic tools and then a matching upper bound by constructing an asymptotically optimal cloning map, proving the following theorem:
\begin{theorem}\label{thm:bound}
    The copy complexity of cloning $M$ additional copies of $d$-dimensional quantum states of rank at most $r$ with fidelity at least $1-\varepsilon$ is $N=\Theta(Mrd/\varepsilon)$.
\end{theorem}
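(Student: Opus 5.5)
We prove the two directions separately, starting with the upper bound, which comes from an explicit construction.

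\emph{Upper bound.} Fix an environment of dimension $r$ and let $\mathcal{H}=\mathbb{C}^d\otimes\mathbb{C}^r$. Each copy of $\rho$ is mapped to a random purification: apply the channel $\rho^{\otimes N}\mapsto \mathbb{E}_{U}\,[(\mathbb{1}\otimes U)\ketbra{\psi_\rho}{\psi_\rho}(\mathbb{1}\otimes U^\dagger)]^{\otimes N}$. This channel is implementable because the average over Haar-random $U$ on $\mathbb{C}^r$ of $N$ copies of a purification depends only on $\rho^{\otimes N}$. Concretely, it is the Schur--Weyl-type purification map: project onto the symmetric subspace of $(\mathbb{C}^d\otimes\mathbb{C}^r)^{\otimes N}$, which by Schur--Weyl duality is $\bigoplus_\lambda V_\lambda^{d}\otimes V_\lambda^{r}$, and fill the $V_\lambda^r$ factor maximally mixed.

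Next, apply the optimal Werner pure-state cloner $N\to N+M$ on the symmetric subspace of $\mathcal{H}$, whose dimension is $D=dr$, and trace out the environment. The Werner cloner acts identically on each member of the ensemble $\{\ket{\psi_U}^{\otimes N}\}$ and achieves fidelity
\[
F_{\mathrm{pure}}=\frac{\binom{N+D-1}{N}}{\binom{N+M+D-1}{N+M}}\ge 1-\frac{M(D-1)}{N}\approx 1-\frac{Mrd}{N}.
\]
By joint concavity of $\sqrt F$ (or via Uhlmann monotonicity under the partial trace on the environment together with a convexity argument on the mixture over $U$), the output fidelity with $\rho^{\otimes(N+M)}$ is at least this pure-state fidelity. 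One subtlety is that the averaged target $\mathbb{E}_U\ketbra{\psi_U}{\psi_U}^{\otimes(N+M)}$ traces to $\rho^{\otimes(N+M)}$. We therefore bound the fidelity of the mixtures using the joint ensemble $\sum_U \ketbra{U}{U}\otimes(\cdot)$ and data processing. Setting $N\ge Mrd/\varepsilon$ gives fidelity at least $1-\varepsilon$. States of rank below $r$ are handled by the same purification.

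\emph{Lower bound.} Restrict to the family $\rho_V=V(\mathbb{1}_r/r)V^\dagger$, with $V$ an isometry $\mathbb{C}^r\to\mathbb{C}^d$, i.e.\ flat spectrum on a varying $r$-dimensional subspace. By covariance (twirling over $U(d)$, which preserves the worst-case fidelity), we may assume the cloner is $U(d)$-covariant. Then $\rho_V^{\otimes N}$ decomposes over Young diagrams $\lambda$ with at most $r$ rows as $\bigoplus_\lambda p_\lambda\, \sigma_\lambda\otimes \mathbb{1}_{\lambda}/\dim$, where the $U(d)$ part lives in $V_\lambda^d$. Schur--Weyl duality then reduces cloning to a covariant transformation between the irreps $\lambda\vdash N$ and $\mu\vdash N+M$. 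The dominant diagrams are near-rectangular, with rows $\approx N/r$.

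The key estimate is that for a covariant channel, the fidelity is bounded by the ratio of dimensions of the relevant $U(d)$-irreps, in analogy with Werner's $d_{N}/d_{N+M}$. Plausibly $F\lesssim \dim V^d_\lambda/\dim V^d_\mu$ with $\mu=\lambda+(M/r,\dots,M/r)$ in the $r$ rows. Via the Weyl dimension formula, adding $M/r$ boxes to each of $r$ rows of length $N/r$ multiplies the dimension by roughly $\prod_{i\le r<j}(1+\frac{M/r}{N/r+r-i+j-i})\approx (1+M/N)^{r(d-r)}$, together with an $r^2$-type contribution from the within-block factors. This gives $1-F\gtrsim M r(d-r)/N$, which is $\Omega(Mrd/N)$ when $r\le d/2$; the case $r>d/2$ is handled by comparing with the pure-state-like counting, $\Omega(Md^2/N)$.

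The main obstacle is making this step rigorous. The fidelity is not directly a dimension ratio, because the input is a mixture over many $\lambda$ and the output must match a mixture over many $\mu$ with the correct Schur--Weyl weights. We would first try a bound of the form $\sqrt F\le\sum_{\lambda,\mu}\sqrt{p_\lambda q_\mu}\,\min(1,\sqrt{\dim\lambda/\dim\mu})$, using that a covariant map $\lambda\to\mu$ can overlap with the fixed target block only up to this ratio. We would then use concentration of $p_\lambda$ around the rectangular shape, which has fluctuations of order $\sqrt N$ and so contributes only lower-order corrections, to conclude that $N=\Omega(Mrd/\varepsilon)$.
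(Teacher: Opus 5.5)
\paragraph*{Assessment.} Your upper bound matches the paper's argument: random purification into $\mathbb{C}^d\otimes\mathbb{C}^r$, the Werner cloner in dimension $rd$, joint concavity of $\sqrt{F}$ over the Haar mixture, then monotonicity under $\tr_R$. It is fine. The lower bound, however, has a genuine gap at exactly the step you flag as the main obstacle. The per-sector estimate is only conjectured ("plausibly"), yet it is the technical heart of this direction. Dimension counting also cannot supply it.

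The best a dimension count gives for a covariant component $\lambda\to\mu$ with intertwiner $U:\mathrm{V}^d_\mu\to\mathrm{V}^d_\lambda\otimes\mathrm{V}^d_\nu$ comes from $U^\dagger(P^{d,l}_\lambda\otimes I)U\le I$. It bounds the overlap with the hard support by $\dim\mathrm{V}^d_\lambda\dim\mathrm{V}^l_\mu/(\dim\mathrm{V}^d_\mu\dim\mathrm{V}^l_\lambda)$. Bare $\mathrm{U}(d)$ dimension ratios are the wrong object for general $\mu$: at $l=d$ every component has overlap exactly $1$.

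The cloner is adversarial, and $\nu$ ranges over staircases that may have negative entries. So it can route weight into sectors $\mu$ that reshuffle boxes among the first $l$ rows, rather than adding $M/r$ boxes to each row. Take $d=3$, $l=2$, $\lambda=(a,a)$, $\nu=(K+M,0,-K)$, $\mu=(a+K+M,a-K)$. This is an admissible transition: after shifting $\nu,\mu$ by $(K,K,K)$ a Littlewood--Richardson tableau exists. The ratio above equals $(a+1)(a+2)/[(a+K+M+2)(a-K+1)]$, which exceeds $1$ once $K(K+M+1)>M(a+1)$, i.e.\ already for $K\sim\sqrt{Ma}$. Here $\lambda$ is exactly the typical rectangular shape, so concentration of $p_\lambda$ does not help: the difficulty lies in the choice of $\mu$. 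Your inequality $\sqrt F\le\sum_{\lambda,\mu}\sqrt{p_\lambda q_\mu}\min(1,\cdot)$ is also not derived.

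\paragraph*{How the paper closes this.} It proceeds in four steps.
\begin{enumerate}
\item Since $\rho_l^{\otimes(N+M)}$ is flat on its support, Cauchy--Schwarz gives $F\le\tr(\Pi_l^{\otimes(N+M)}\mathcal{T}(\rho_l^{\otimes N}))$. This removes the need to match the output weights $q_\mu$, and the bound is linear in $\mathcal{T}$.
\item It symmetrizes over input and output permutations as well as $\mathrm{U}(d)$; you only twirl over $\mathrm{U}(d)$. It then uses the classification of such channels as convex combinations of Clebsch--Gordan component maps $\mathcal{T}^{\nu,\psi}_{\lambda\to\mu}$, so it suffices to bound each component.
\item Crucially, it proves that for \emph{every} $\lambda,\mu,\nu$ and every intertwiner, the component overlap is at most $\prod_{i=l+1}^{d}\frac{\lambda_1+i+M-2}{\lambda_1+i+2M-2}$, independent of $\mu$. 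The proof is by induction on $d-l$ using the branching rule, $\mathrm{U}(1)$-weight preservation of $U$, the raising operators $E^{d,i}$ linking the weight-$0$ and weight-$1$ blocks, and a quadratic-Casimir eigenvalue computation.
\item Because this bound depends only on $\lambda_1$, averaging needs just Jensen's inequality and $\mathbb{E}\lambda_1\le N/l+2\sqrt N$.
\end{enumerate}

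\paragraph*{The case $r>d/2$.} Your patch here is wrong: pure states only force $1-F=\Omega(Md/N)$, not $\Omega(Md^2/N)$. The fix is to use the flat family of rank $l=\lceil r/2\rceil$, for which $l(d-l)=\Theta(rd)$.
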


Theorem~\ref{thm:bound} extends the optimal pure-state cloning scaling to states of arbitrary rank. For $r=1$, it recovers the known pure-state scaling $N=\Theta(Md/\varepsilon)$, while for mixed states the rank contributes an additional multiplicative factor $r$. 
Importantly, the full lower-bound scaling already holds for a family of states with a fixed flat spectrum, showing that the difficulty is already present when only the supporting subspace is unknown.

For $M=1$, this copy complexity reduces to $N = \Theta(rd/\varepsilon)$, which coincides with the optimal sample complexity of rank-$r$ quantum state tomography under fidelity error~\cite{yuen2023improved, scharnhorst2025optimal, pelecanos2025mixed}. 
Motivated by this matching, we also investigate the relationship between optimal cloning and optimal tomography.
In particular, we show that any high-fidelity tomography algorithm can be coherently converted into a cloning map with comparable fidelity error, thereby providing further intuition for why the two tasks have the same asymptotic copy/sample complexity.

\medskip

\textit{Optimal pure-state cloning---}
As a preliminary, we first recall the pure-state cloning problem, following Ref.~\cite{werner1998optimal}.
The task is to approximately transform $N$ copies of an unknown pure state $\rho=\ketbra{\psi}{\psi}$ into $N+M$ copies.
A key simplification in the pure-state setting is that both the input state $\rho^{\otimes N}$ and the target output state $\rho^{\otimes (N+M)}$ are supported on the fully symmetric subspaces.
From the Schur-Weyl perspective~\cite{fulton2013representation}, this means that only one representation sector is relevant: the sector associated with the one-row Young diagram.
Thus, pure-state cloning avoids the main complication that will appear for mixed states, namely, transitions among many different representation sectors~(see Fig.~\ref{fig:map} (a)).
This is what makes pure-state cloning admit a particularly simple and elegant solution.

The optimal $N\to N+M$ pure-state cloning map can be described operationally as follows: append an unnormalized identity operator to the input and project the resulting state onto the symmetric subspace with the appropriate normalization.
More precisely,
\begin{align}
    &\mathcal{T}_\mathrm{Pure}(\rho^{\otimes N})\notag\\
    &\quad\coloneqq\frac{\dim(\mathrm{Sym}_N^d)}{\dim(\mathrm{Sym}_{N+M}^d)}\Pi_{\mathrm{Sym}_{N+M}^d}(\rho^{\otimes N}\otimes {I}_d^{\otimes M})\Pi_{\mathrm{Sym}_{N+M}^d},
\end{align}
where $\mathrm{Sym}_N^d$ denotes the symmetric subspace of $(\mathbb C^d)^{\otimes N}$, $\Pi_{\mathrm{Sym}_N^d}$ denotes the corresponding projector, and $I_d$ is the identity operator on $\mathbb C^d$.
The normalization factor makes the map trace-preserving on the symmetric input subspace.

Here, using the dimension formula for the symmetric subspace, $\dim(\mathrm{Sym}_N^d)=\binom{d+N-1}{N}$, a direct calculation shows that this constant coincides with the cloning fidelity of the optimal pure-state cloning map:
\begin{align}
    F(\rho^{\otimes (N+M)}, \mathcal{T}_{\mathrm{Pure}}(\rho^{\otimes N}))= \prod_{i=1}^M\frac{N+i}{N+d+i-1}
\end{align}
for all pure $\rho$.
Expanding the above expression gives the fidelity scaling $1-\Theta(Md/N)$ in the high-fidelity regime.
Therefore, achieving worst-case fidelity at least $1-\varepsilon$ requires $N=\Theta(Md/\varepsilon)$ copies, establishing the copy complexity of pure-state cloning.

This simple one-sector structure, however, is special to pure states.
For a mixed state $\rho$, the tensor power $\rho^{\otimes N}$ is no longer confined to the symmetric subspace.
Instead, it decomposes over many Schur-Weyl sectors, each labeled by a Young diagram, and a general cloning map can, in principle, transfer weight between these sectors~(see Fig.~\ref{fig:map} (b)).
In fact, the optimal pure-state cloning map above is not even trace-preserving on arbitrary mixed-state inputs.
Thus, mixed-state cloning cannot be obtained by a direct extension of the symmetric-subspace projection argument; one must control all possible sector-to-sector transitions.
Nevertheless, this pure-state cloning map will reappear in our upper bound after we lift the mixed input to random purifications in a larger Hilbert space.

\medskip

\textit{Lower bound---}
Our first main result is that $N=\Omega(Mrd/\varepsilon)$ copies are necessary for cloning $M$ additional copies of all $d$-dimensional states of rank at most $r$. 
The lower bound already follows from a particularly simple family of mixed states: normalized projectors onto $l$-dimensional subspaces, $\rho_l=\Pi_l/l$, with $l=\Theta(r)$. 
These states have a fixed flat spectrum, so the only state-dependent information is the choice of the supporting subspace. 
We show that even for this restricted family, every cloning map must incur an infidelity of order $Mrd/N$ in the high-fidelity regime.

To prove this, we exploit the symmetry of this family. 
By averaging over unitary rotations and permutations, any cloning map can be replaced, without decreasing its worst-case fidelity, by a unitary-covariant and permutation-invariant map. 
Schur-Weyl duality then reduces such maps to elementary transitions between representation sectors, which can be bounded uniformly. 
We provide a proof sketch below and defer the full argument to Supplemental Material (SM) Section S2.

\begin{proof}[Proof sketch of lower bound]
    We prove that every cloning map has worst-case cloning fidelity at most $1-\Omega(Mrd/N)$ over $d$-dimensional states of rank at most $r$.
    By the symmetry reduction below, it suffices to prove this upper bound for unitary-covariant and permutation-invariant cloning maps.

    Given an arbitrary cloning map $\mathcal{T}$, define its unitary symmetrization by
    \begin{align}
        \bar{\mathcal{T}}(\rho^{\otimes N})\coloneqq \int U^{\dagger\otimes (N+M)}\mathcal{T}((U\rho U^\dagger)^{\otimes N})U^{\otimes (N+M)}dU
    \end{align}
    where the integration is taken with respect to the Haar measure.
    Since the ideal target $\rho^{\otimes (N+M)}$ transforms covariantly under the same unitary action and fidelity is concave, this symmetrization cannot decrease the worst-case fidelity, \textit{i.e.},
    \begin{align}
        \inf_\rho F(\rho^{\otimes (N+M)}, \mathcal{T}(\rho^{\otimes N}))\leq \inf_\rho F(\rho^{\otimes (N+M)}, \bar{\mathcal{T}}(\rho^{\otimes N})).
    \end{align}
    Moreover, the symmetrized map is unitary covariant by construction.
    Averaging over input and output permutations in the same way further allows us to assume permutation invariance.

    Imposing these symmetries significantly reduces the degrees of freedom of the cloning map.
    Under Schur-Weyl duality, the input and output Hilbert spaces decompose into sectors labeled by Young diagrams $\lambda\vdash_d N$ and $\mu\vdash_d N+M$, respectively.
    This allows us to write a cloning map as
    \begin{align}
        \mathcal{T}(\rho^{\otimes N})=\sum_\lambda\mathcal{T}_\lambda(\rho^{\otimes N}|_\lambda)
    \end{align}
    where $\mathcal{T}_\lambda$ acts only on the $\lambda$-irrep sector of the input Hilbert space and $\rho^{\otimes N}|_\lambda$ denotes the input state $\rho^{\otimes N}$ restricted to this sector.
    Using the classification of unitary-covariant and permutation-invariant channels~\cite{manvcinska2025classification}, each $\mathcal T_\lambda$ is a convex combination of elementary component maps connecting allowed input and output representation sectors. 
    Hence, it is sufficient to bound the cloning fidelity of each component map separately; the same bound then holds for an arbitrary symmetric cloning map by convexity. 
    The explicit form of these component maps is given in the Supplemental Material.

    The final step is to bound each component map on the hard instance $\rho_l=\Pi_l/l$ introduced above. 
    For this state, the overlap of each component output with the support of the ideal clone can be bounded in terms of the first-row length $\lambda_1$ of the input Young diagram $\lambda$. 
    Averaging these bounds over the Schur-Weyl distribution of $\rho_l^{\otimes N}$ yields
    \begin{align}
    F(\rho_l^{\otimes (N+M)},\mathcal{T}(\rho_l^{\otimes N}))
    \le
    1-\Omega\left(\frac{Mrd}{N}\right)
    \end{align}
    in the high-fidelity regime. 
    Since the bound holds uniformly for every component map, it also holds for any convex combination of them, and hence for every cloning map after the symmetry reduction.

\end{proof}

\begin{figure}[t]
    \centering
    \includegraphics[width=\linewidth]{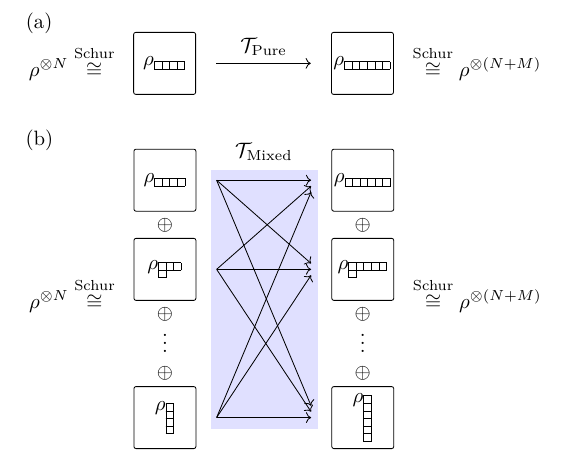}
    \caption{Representation-theoretic structure of unitary-covariant and permutation-invariant $N \to N+M$ cloning maps, illustrated for $M=2$. (a) In pure-state cloning, the input and ideal output states lie in the symmetric subspace, so the optimal cloning map reduces to the fixed map $\mathcal{T}_{\mathrm{Pure}}$. (b) For general mixed-state cloning, the focus of this work, the cloning map $\mathcal{T}_{\mathrm{Mixed}}$ is expressed as a convex combination of component maps between irrep sectors.}
    \label{fig:map}
\end{figure}

For $M=1$, our lower bound reduces to $N=\Omega(rd/\varepsilon)$, matching the optimal sample-complexity lower bound for rank-$r$ quantum state tomography under fidelity error~\cite{yuen2023improved, scharnhorst2025optimal}. 
Interestingly, flat-spectrum states with an unknown supporting subspace, as used in our lower-bound construction, also form hard instances for tomography~\cite{scharnhorst2025optimal}. 
This parallel suggests a close connection between cloning and tomography, which we make more precise later.

\medskip

\textit{Upper bound---}
We now prove the matching upper bound by constructing an explicit cloning map based on random purification~\cite{tang2025conjugate, pelecanos2025mixed}.
The main idea is to lift a rank-$r$ mixed state to a pure state on a larger Hilbert space of dimension $rd$, apply the optimal pure-state cloning map there, and then trace out the purification register.
This yields a cloning map with fidelity at least $\prod_{i=1}^M (N+i)/(N+rd+i-1)$, and hence achieves $N=O(Mrd/\varepsilon)$.
We give the proof sketch here and defer the full proof to SM Section S3.

\begin{proof}[Proof sketch of upper bound]
    The construction is guided by the optimal pure-state cloning map reviewed above.
    If the mixed input $\rho^{\otimes N}$ could be represented as $N$ copies of a pure state, then applying the pure-state cloning map would immediately produce $M$ additional copies with fidelity governed by the symmetric-subspace dimension ratio.
    The obstacle is that $\rho^{\otimes N}$ itself is not supported on a single symmetric subspace.
    We overcome this obstacle by first applying the recently introduced {\it random purification} map~\cite{pelecanos2025mixed}, which lifts $\rho^{\otimes N}$ to a mixture of $N$ identical pure-state purifications in a larger Hilbert space.

    More precisely, let $\ket{\Psi_{\rho}}\in\mathbb{C}^{d}\otimes\mathbb{C}^{r}$ be a purification of a state $\rho$ of rank at most $r$, and define $\ket{\Psi_{\rho,U}}=(I_d\otimes U)\ket{\Psi_{\rho}}$ for $U\in\mathrm{U}(r)$.
    The random purification map is defined by
    \begin{align}
        \Phi(\rho^{\otimes N})\coloneqq\mathbb{E}_U \ketbra{\Psi_{\rho,U}}{\Psi_{\rho,U}}^{\otimes N},
    \end{align}
    which is a completely positive and trace-preserving map, where the expectation is taken over Haar-random unitaries $U$.
    Here, the output state lies entirely in the symmetric subspace of the $N$-copy Hilbert space $(\mathbb C^{rd})^{\otimes N}$.

    Now consider the cloning map obtained by sequentially applying the random purification channel, the optimal pure-state $N \to N+M$ cloning map, and the partial trace over the purification registers~$R$:
    \begin{align}
        \mathcal{T}_{\mathrm{RP}}(\rho^{\otimes N})
        \coloneqq
        \tr_{R}\mathcal{T}_\mathrm{Pure}(\Phi(\rho^{\otimes N})).
    \end{align}
    The cloning fidelity of the random-purification cloning map $\mathcal{T}_{\mathrm{RP}}$ is then lower bounded by that of the optimal pure-state cloning map in dimension $rd$, namely,
    \begin{align}
        F(\rho^{\otimes (N+M)}, \mathcal{T}_{\mathrm{RP}}(\rho^{\otimes N}))\geq \prod_{i=1}^M \frac{N+i}{N+rd+i-1}
    \end{align}
    for all $\rho$.
    This yields the copy complexity $N=O(Mrd/\varepsilon)$, completing the proof.
\end{proof}

Together with the lower bound, this shows that the random-purification cloning map $\mathcal{T}_{\mathrm{RP}}$ is asymptotically optimal. 
The upper bound also gives a simple operational interpretation of the $rd$ dependence: random purification lifts a rank-$r$ state from dimension $d$ to a pure state in dimension $rd$, where the optimal pure-state cloner can be applied, while the matching lower bound shows that the resulting rank dependence is unavoidable.

The constructive nature of the upper bound also allows us to characterize the gate complexity of the resulting cloning protocol. 
The gate complexity of the random purification map is dominated by the Schur transform and is given by $\mathrm{poly}(N,\log rd)$ in the regime $rd\gg N$~\cite{harrow2005applications, pelecanos2025mixed}. 
Meanwhile, the best-known gate complexity for the optimal pure-state cloning map is $O((N+M)rd\operatorname{polylog}(N+M,rd,1/\varepsilon))$~\cite{manvcinska2025classification}. 
Combining these two gives a concrete circuit implementation of the random-purification cloning protocol, in addition to its optimal copy-complexity scaling.

This construction also parallels the optimal tomography algorithm, which similarly applies random purification to the input state $\rho^{\otimes N}$ followed by the optimal pure-state tomography algorithm.

\medskip

\textit{Relation with tomography---}
We now specialize to the case $M=1$.
In this regime, Theorem~\ref{thm:bound} gives the copy complexity $N=\Theta(rd/\varepsilon)$, matching the optimal sample complexity of rank-$r$ quantum state tomography for worst-case expected infidelity~\cite{yuen2023improved,scharnhorst2025optimal,pelecanos2025mixed}.
Our result reveals that this matching reflects more than a coincidence in asymptotic scaling: a high-fidelity tomography procedure can be implemented coherently and converted into a high-fidelity cloning map without increasing the number of input copies.

Relations between cloning and state estimation have long been studied in asymptotic settings~\cite{bae2006asymptotic, chiribella2010quantum, yang2013global, harrow2013church}, while recent works have also highlighted computational and cryptographic distinctions between the two tasks~\cite{nehoran2023computational, bostanci2025general,fefferman2025hardness, bansal2026cloning}.
In the information-theoretic setting considered here, we establish the following direct reduction:
\begin{theorem}
    For any class of quantum states, suppose there exists a tomography algorithm that estimates the state with fidelity at least $1-\varepsilon$ using $N$ samples with $O(\varepsilon)$ failure probability.
    Then there exists an $N\to N+1$ cloning map that achieves fidelity at least $1-c\varepsilon$ for some constant $c$, using the same $N$ input copies.
\end{theorem}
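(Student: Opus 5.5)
The plan is to run the tomography algorithm \emph{coherently}, so that the input copies are left essentially undisturbed. A measure-and-prepare strategy cannot work: preparing $\sigma^{\otimes(N+1)}$ from an estimate $\sigma$ gives fidelity about $F(\rho,\sigma)^{N+1}\approx(1-\varepsilon)^{N+1}$, which is far too small. Instead, I will keep the $N$ input copies, use the classical estimate only to prepare the single new copy, and then uncompute the measurement.

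\textbf{Setup.} Let the tomography algorithm be a POVM $\{E_x\}$ on $(\mathbb C^d)^{\otimes N}$ with outcome $x\mapsto\sigma_x$. The hypothesis gives
\begin{align}
\sum_x p_x F(\rho,\sigma_x)\ge 1-O(\varepsilon), \qquad p_x=\tr(E_x\rho^{\otimes N}).
\end{align}
This holds because outcomes with $F(\rho,\sigma_x)\ge1-\varepsilon$ carry probability at least $1-O(\varepsilon)$. Take a Naimark/Stinespring unitary $W$ on the input together with an outcome register $X$ initialized to $\ket{0}$, so that
\begin{align}
W(\ket{\phi}\ket{0})=\sum_x(\sqrt{E_x}\ket{\phi})\ket{x}.
\end{align}
Let $\ket{\Phi}$ be a purification of $\rho^{\otimes N}$ with reference $R$. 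For each $x$, use Uhlmann's theorem to fix purifications $\ket{\rho}_{AB}$ of $\rho$ and $\ket{\sigma_x}_{AB}$ of $\sigma_x$ with $|\langle\rho|\sigma_x\rangle|=\sqrt{F(\rho,\sigma_x)}$ and a real, nonnegative phase. This choice may depend on $x$ because $x$ sits in a classical register.

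\textbf{The cloning map.} The map $\mathcal T$ acts in three steps:
\begin{enumerate}
\item apply $W$;
\item controlled on $\ket{x}$, prepare $\ket{\sigma_x}_{AB}$ in a fresh register $AB$ (an isometry);
\item apply $W^\dagger$ and trace out $X$ and $B$.
\end{enumerate}
The output lives on the $N$ input systems together with $A$, i.e.\ $N+1$ copies.

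\textbf{Fidelity analysis.} Compare the actual global pure state
\begin{align}
\ket{\Omega}=\sum_x(\sqrt{E_x}\otimes I_R)\ket{\Phi}\ket{x}\ket{\sigma_x}
\end{align}
with the ideal pure state
\begin{align}
\ket{\Omega_0}=\sum_x(\sqrt{E_x}\otimes I_R)\ket{\Phi}\ket{x}\ket{\rho}=(W\ket{\Phi}\ket{0})\ket{\rho}.
\end{align}
By orthogonality of the $\ket{x}$,
\begin{align}
\langle\Omega_0|\Omega\rangle=\sum_x p_x\langle\rho|\sigma_x\rangle=\sum_x p_x\sqrt{F(\rho,\sigma_x)}\ge\sum_x p_x F(\rho,\sigma_x)\ge1-O(\varepsilon).
\end{align}
Now apply the channel ``$W^\dagger$ followed by tracing out $X,B,R$''. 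The ideal state is mapped to $\tr_B\big(\ketbra{\rho}{\rho}\big)\otimes\rho^{\otimes N}=\rho^{\otimes(N+1)}$, since $W^\dagger W\ket{\Phi}\ket{0}=\ket{\Phi}\ket{0}$. The actual state is mapped to $\mathcal T(\rho^{\otimes N})$. Monotonicity of fidelity under CPTP maps then gives
\begin{align}
F\big(\rho^{\otimes(N+1)},\mathcal T(\rho^{\otimes N})\big)\ge|\langle\Omega_0|\Omega\rangle|^2\ge1-c\varepsilon.
\end{align}

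\textbf{Expected obstacles.} The step needing most care is the phase bookkeeping in the overlap: the inner products $\langle\rho|\sigma_x\rangle$ must add coherently, which is why the $x$-dependent Uhlmann phases are chosen real and nonnegative. A second point is that $\mathcal T$ must be CPTP on all inputs, not just on product inputs. This holds because $\mathcal T$ is a composition of isometries, the unitary $W^\dagger$, and partial traces. If the output is required to be permutation-symmetric, one can additionally symmetrize over $S_{N+1}$; since $\rho^{\otimes(N+1)}$ is invariant and fidelity is concave, this does not lower the bound.
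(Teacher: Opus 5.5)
Your construction matches the paper's: coherent tomography, controlled preparation of a purification of the estimate on a fresh register, uncomputation by $W^\dagger$, and monotonicity against the ideal state $(W\vert\Phi\rangle\vert 0\rangle)\otimes\vert\rho\rangle$. However, the step you flagged yourself, the phase bookkeeping, is resolved illegitimately. The cloning map must be one fixed channel for the whole class. The vector it prepares in step 2 may depend on the outcome $x$, but not on the unknown $\rho$. The Uhlmann-optimal purification of $\sigma_x$ with $\langle\rho\vert\sigma_x\rangle=\sqrt{F(\rho,\sigma_x)}\geq 0$ is defined relative to $\vert\rho\rangle$. So it depends on $\rho$, and already its phase does; as defined, $\mathcal T$ is not a cloning map.

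The choice is not cosmetic. Replacing $\vert\sigma_x\rangle$ by $e^{i\theta_x}\vert\sigma_x\rangle$ is the same as inserting the diagonal unitary $\sum_x e^{i\theta_x}\vert x\rangle\langle x\vert$ on $X$ before $W^\dagger$. That changes $\mathcal T(\rho^{\otimes N})$, and for badly aligned phases the sum $\sum_x p_x\langle\rho\vert\sigma_x\rangle$ can nearly cancel. Nor can you move the $x$-dependence onto the purification of $\rho$. With an $x$-dependent $\vert\rho_x\rangle$, the state $\sum_x(\sqrt{E_x}\otimes I_R)\vert\Phi\rangle\vert x\rangle\vert\rho_x\rangle$ is no longer $(W\vert\Phi\rangle\vert 0\rangle)\otimes\vert\rho\rangle$. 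Then $W^\dagger$ followed by the partial trace need not return $\rho^{\otimes(N+1)}$. Uhlmann's theorem also gives no single purification of $\rho$ optimally aligned with all $\vert\sigma_x\rangle$ at once.

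What is missing is a purification that is a function of $\sigma_x$ alone, yet has a real, nonnegative overlap of size at least $F(\rho,\sigma_x)$ with one fixed purification of $\rho$, for every $x$ simultaneously. The paper uses the canonical purifications $\vert\sigma_x\rangle=\sqrt d\,(\sqrt{\sigma_x}\otimes I)\vert\Gamma\rangle$ and $\vert\rho\rangle=\sqrt d\,(\sqrt\rho\otimes I)\vert\Gamma\rangle$, with $\vert\Gamma\rangle$ the normalized maximally entangled state. Then $\langle\rho\vert\sigma_x\rangle=\tr(\sqrt\rho\sqrt{\sigma_x})$ is automatically real and nonnegative. The affinity bound $\tr(\sqrt\rho\sqrt{\sigma_x})\geq F(\rho,\sigma_x)$, which the paper takes from Theorem 3 of Audenaert's comparison of distinguishability measures, replaces your Uhlmann step. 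Since you only used $\sqrt{F}\geq F$ anyway, the rest of your argument then goes through verbatim. It gives $F(\rho^{\otimes(N+1)},\mathcal T(\rho^{\otimes N}))\geq ((1-\delta)(1-\varepsilon))^2$, exactly as in the paper.
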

\noindent
The assumption on the failure probability is not restrictive for the asymptotic analysis of the copy complexity, since the optimal tomography algorithm achieves a failure probability that decays exponentially with $N$~\cite{pelecanos2025mixed}. 
In general, the failure probability contributes linearly to the fidelity error of the resulting cloning maps.

The theorem gives a direct operational meaning to the matching copy complexities of cloning and tomography. 
Although tomography is usually viewed as a measurement process that converts quantum states into classical information, it can instead be implemented coherently: the estimate can be used to prepare an additional copy, after which the tomography procedure is uncomputed. 
In this way, information extracted for state estimation can be reused to generate a new quantum copy without increasing the number of input states.

This theorem can be viewed as a mixed-state analogue of the tomography-to-cloning implication of Ref.~\cite{fefferman2025hardness}, which was proved for pure states. For mixed states, the coherent-measurement construction is combined with purifications of the estimated states to control the global cloning fidelity. We provide a proof sketch here and give the full proof in SM Section S4.

\begin{proof}[Proof sketch]
    We construct a map that sequentially performs a coherent tomography measurement, controlled preparation of an additional copy, and an \textit{uncomputation} of the measurement process.
    Let $\{K_x\}$ be Kraus operators for the tomography POVM, and let $\hat{\rho}_x$ be the estimate produced upon observing outcome $x$.    
    The isometry
    \begin{align}
        V = \sum_x K_x \otimes \ket{x}
    \end{align}
    then gives a coherent implementation of this tomography procedure.
    Define the cloning map by the following circuit:
    \begin{figure}[h!]
        \centering
        \includegraphics{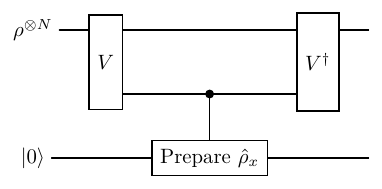}
        \caption{An $N \to N+1$ cloning map constructed from a tomography algorithm.}
    \end{figure}
    
    \noindent
    The circuit applies $V$, coherently prepares an additional system in the estimate $\hat{\rho}_x$, and then applies $V^\dagger$.
    Using purifications of $\rho$ and $\hat{\rho}_x$ together with Uhlmann's theorem, one obtains fidelity at least $1-O(\varepsilon)$ with the ideal cloned state $\rho^{\otimes (N+1)}$.
\end{proof}

The reduction has an immediate consequence: any lower bound on the copy complexity of $M=1$ cloning also applies to the sample complexity of tomography. 
Consequently, our cloning lower bound yields $N=\Omega(rd/\varepsilon)$ for tomography, providing an alternative route to the optimal bound recently established in Ref.~\cite{scharnhorst2025optimal}. 
Conversely, known optimal tomography algorithms with sample complexity $N=O(rd/\varepsilon)$ also yield cloning protocols with the same scaling. 
The random-purification construction developed above provides a direct cloning protocol and, unlike this reduction, applies naturally to arbitrary $M$.

\medskip

\textit{Conclusion---}
In this work, we established the optimal copy complexity $N=\Theta(Mrd/\varepsilon)$ for cloning $M$ additional copies of an arbitrary $d$-dimensional state of rank at most $r$. 
The matching lower and upper bounds show that the additional rank dependence is fundamental: it already appears for states with a fixed flat spectrum, while random purification provides an explicit protocol attaining the optimal scaling. 
For $M=1$, we further uncovered an operational connection between cloning and tomography through a coherent tomography-to-cloning reduction. 
An important open question is to determine the exact optimal cloning fidelity beyond its asymptotic scaling.
Another interesting direction is the optimal cloning of more restricted classes of states, \textit{e.g.}, states with a limited amount of magic or states generated by shallow circuits.

\medskip
\paragraph*{Note added.}
During the preparation of this manuscript, the independent work of Li, Theil, Harrow, and Chuang~\cite{li2026coherent} appeared, which also uses random purification together with pure-state cloning in the context of coherent quantum inference. 
Our work establishes the matching lower bound under worst-case global fidelity, thereby determining the optimal asymptotic copy complexity of mixed-state cloning.

\medskip
\paragraph*{AI use statement.}
The authors used OpenAI ChatGPT (GPT-5.5) to assist in checking the mathematical proofs and revising the manuscript. In this process, the model identified an error in an earlier proof, and the authors used subsequent discussions with the model to help diagnose and revise the argument. The corrected proof and all results in the manuscript were independently verified by the authors, who take full responsibility for the content.

\bigskip

\begin{acknowledgements}
This research was supported by the National Research Foundation of Korea Grants (No. RS-2024-00431768 and No. RS-2025-00515456) funded by the Korean government (Ministry of Science and ICT (MSIT)) and the Institute of Information \& Communications Technology Planning \& Evaluation (IITP) Grants funded by the Korean government (MSIT) (No. RS-2024-00437284, No. IITP-2025-RS-2025-02283189 and No. IITP-2025-RS-2025-02263264). This work was supported by Global Partnership Program of Leading Universities in Quantum Science and Technology (RS-2025-08542968) through the National Research Foundation of Korea~(NRF) funded by the Korean government (Ministry of Science and ICT(MSIT)).
\end{acknowledgements}

\bibliography{reference}

\end{document}


\title{Supplemental Materials: Optimal copy complexity of quantum state cloning}

\author{Sangwoo Jeon}
\email{sangw077@gmail.com}
\affiliation{Department of Physics, Korea Advanced Institute of Science and Technology, Daejeon 34141, Korea}

\author{Vaughn Sohn}
\affiliation{Department of Physics, Korea Advanced Institute of Science and Technology, Daejeon 34141, Korea}
\affiliation{Department of Electronic Engineering, Korea Advanced Institute of Science and Technology, Daejeon 34141, Korea}

\author{Changhun Oh}
\email{changhun0218@gmail.com}
\affiliation{Department of Physics, Korea Advanced Institute of Science and Technology, Daejeon 34141, Korea}

\date{\today}

\maketitle

\tableofcontents

\section{Review of representation theory}

We provide a brief review of the representation-theoretic tools used below, referring the reader to standard references for further details~\cite{brian2003lie, goodman2009symmetry, fulton2013representation, hayashi2017group}.
We begin with a few basic definitions.

\begin{definition}[Representations]\label{def:rep}
    A representation of a group $G$ is a pair $(\varphi,V)$, where $V$ is a vector space and $\varphi:G\to \mathrm{GL}(V)$ is a group homomorphism.
    Here, $\mathrm{GL}(V)$ denotes the group of invertible linear maps on $V$. 
    The dimension of $V$, $\dim(V)$, is called the dimension of the representation.
\end{definition}
\noindent
Throughout, all representations are on finite-dimensional complex vector spaces.
We denote the representation $(\varphi,V)$ by $\varphi$ or by $V$ when it is clear from the context.
\begin{definition}[Irreducible representations]\label{def:irrep}
    For a representation $(\varphi, V)$ of a group $G$, a subspace $W \subseteq V$ is called an invariant subspace of $V$ if $\varphi(g) W \subseteq W$ holds for all $g \in G$.
    If such a nontrivial subspace $W$ (i.e., $W \ne V$ and $W \ne \{0\}$) exists, $(\varphi, V)$ is called a reducible representation. Otherwise, it is called an irreducible representation, or an irrep for short.
\end{definition}
\begin{definition}[Isomorphic representations]
    Two representations $(\varphi_1, V_1)$ and $(\varphi_2, V_2)$ of a group $G$ are called isomorphic (or equivalent), written $\varphi_1\cong\varphi_2$, if there exists an invertible linear map $T : V_1 \to V_2$ such that
    $T \varphi_1(g) = \varphi_2(g) T$ for all $g \in G$.
\end{definition}
\begin{definition}[Intertwiner]\label{def:intertwiner}
    For two representations $(\varphi_1, V_1)$ and $(\varphi_2, V_2)$ of a group $G$, a linear map $T : V_1 \to V_2$ is called an intertwiner if $T \varphi_1(g) = \varphi_2(g) T$ for all $g \in G$.
\end{definition}
\noindent
We denote by $\mathrm{Hom}_G(V_1, V_2)$ the set of all intertwiners from $V_1$ to $V_2$.

We focus on the irreducible representations of the permutation group $\mathrm{S}_N$ and the unitary group $\mathrm{U}(d)$.
These representations are indexed by partitions and staircases, respectively, which we define as follows:

\begin{definition}[Staircases and Partitions]\label{def:partition}
    A staircase $\lambda=(\lambda_1,\dots,\lambda_k)$ is a finite tuple of integers satisfying $\lambda_1\geq\dots\geq \lambda_k$.
    A staircase $\lambda$ is said to be a partition if all its entries are nonnegative.
    If a partition $\lambda$ has size $N$, \textit{i.e.}, $|\lambda|\coleq\sum_i \lambda_i=N$, we write $\lambda\vdash N$.
    The length $\len(\lambda)$ is the number of nonzero entries of a partition $\lambda$.
    If a partition $\lambda$ has size $N$ with $\len(\lambda)\leq l$, we write $\lambda\vdash_l N$.
    We identify partitions that differ only by trailing zeros.
\end{definition}
\noindent
Throughout, unless otherwise specified, the entries of a staircase or partition are indexed by subscripts, as in $\lambda=(\lambda_1,\dots,\lambda_k)$.
Partitions and staircases are often represented by (generalized) Young diagrams, which are arrays of boxes.
\begin{figure}[h]
    \centering
    \includegraphics{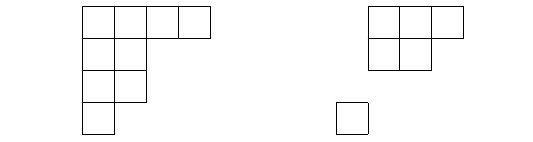}
    \caption{Examples of a Young diagram of a partition $\lambda=(4,2,2,1)$ and a generalized Young diagram of a staircase $\lambda=(3,2,0,-1)$.}
    \label{fig:young}
\end{figure}

\noindent
With this notation, every irrep of $\mathrm{S}_N$ is labeled by a partition $\lambda\vdash N$, and the corresponding representation $\mathrm{Sp}_\lambda$ is called the \textit{Specht module}~\cite[Lecture~4]{fulton2013representation}.
Similarly, every irrep of $\mathrm{U}(d)$ is labeled by a length-$d$ staircase $\lambda$, and the corresponding representation $\mathrm{V}_\lambda^d$ is called the \textit{Weyl module}~\cite[Sec.~4.3.2]{hayashi2017group}.

We refer the reader to the references for rigorous definitions of the Specht and Weyl modules, and focus here on the structure of the Weyl modules needed below.
Restricting an irrep of $\mathrm{U}(d)$ to the subgroup $\mathrm{U}(d-1)\times \mathrm{U}(1)$ yields the following branching rule~\cite[Theorem~8.1.1]{goodman2009symmetry}:
\begin{align}
    \mathrm{V}_{\lambda}^d\downarrow_{\mathrm{U}(d-1)\times \mathrm{U}(1)}
    \cong \bigoplus_{\lambda'\prec\lambda}\mathrm{V}_{\lambda'}^{d-1}\otimes \mathbb{C}_{|\lambda|-|\lambda'|}.
    \label{eq:u_restrict}
\end{align}
Here, $\lambda'\prec \lambda$ denotes the interlacing condition: for the staircases $\lambda$ and $\lambda'$,
\begin{align}
    \lambda_d\leq \lambda'_{d-1}\leq \lambda_{d-1}\leq\dots\leq \lambda'_1\leq\lambda_1
\end{align}
holds.
The representation $\mathbb{C}_{|\lambda|-|\lambda'|}$ denotes the one-dimensional representation of $\mathrm{U}(1)$ defined by $z\mapsto z^{|\lambda|-|\lambda'|}$.
Successively applying the multiplicity-free branching rule along
$\mathrm{U}(d)\to\mathrm{U}(d-1)\to\dots\to\mathrm{U}(1)$
yields an orthogonal decomposition into one-dimensional subspaces indexed by chains of interlacing staircases.
Thus, each such subspace is indexed by a \textit{Gelfand-Tsetlin (GT)} pattern, a triangular array of integers whose adjacent rows satisfy the interlacing conditions, defined as follows:

\begin{definition}[GT pattern~\cite{gelfand1950finite}]
    A GT pattern is a triangular array of integers of the form
    \begin{align}
        \Lambda=
        \left(
        \begin{array}{ccccccc}
        \lambda^d_1 && \lambda^d_2 && \cdots && \lambda^d_d \\
        & \lambda^{d-1}_1 && \cdots && \lambda^{d-1}_{d-1} & \\
        &&& \vdots &&& \\
        &&& \lambda^{1}_{1} &&&
        \end{array}
        \right),
    \end{align}
    where the interlacing condition $\lambda^i_j\geq\lambda^{i-1}_j\geq\lambda^i_{j+1}$ holds for all $i=2,\dots,d$ and $j=1,\dots,i-1$.
\end{definition}
\noindent
The GT patterns with top row $\lambda$ index a basis of the Weyl module $\mathrm{V}^d_\lambda$, called the GT basis.
The action of $\mathrm{U}(d)$ on this basis can be described explicitly through the corresponding Lie algebra representation~\cite{gelfand1950finite, molev2006gelfand}.
For our purposes, however, we do not need the full explicit form of the $\mathrm{U}(d)$ action.
We instead introduce the action of the central elements $zI_d$ for $z\in\mathrm{U}(1)$, which will be used below.
The central element $zI_d$ acts on $\mathrm{V}_{\lambda}^d$ as
\begin{align}
    \varphi_{\lambda}(zI_d)=z^{|\lambda|}I_{\mathrm{V}_\lambda^d}.
    \label{eq:z_indentity_cond}
\end{align}

We now recall the standard representation-theoretic results needed below.
We first recall Schur's lemma, which characterizes intertwiners between irreps:

\begin{lemma}[Schur's lemma]
    Let $(\varphi_1,V_1)$ and $(\varphi_2,V_2)$ be irreps of a group $G$.
    \begin{itemize}
        \item If $\varphi_1\ncong \varphi_2$, then $\mathrm{Hom}_G(V_1,V_2)=\{0\}$.
        \item If $\varphi_1\cong \varphi_2$, then $\mathrm{Hom}_G(V_1,V_2)=\{cT \mid c\in\mathbb{C}\}$ for any fixed nonzero intertwiner $T:V_1\to V_2$.
    \end{itemize}
    In particular, if $(\varphi_1,V_1)=(\varphi_2,V_2)$, then every intertwiner $T:V_1\to V_1$ is of the form $T=cI_{V_1}$ for some $c\in\mathbb{C}$.
\end{lemma}

We next recall Schur-Weyl duality for the $N$-copy qudit Hilbert space $(\mathbb{C}^d)^{\otimes N}$.
The Hilbert space carries two natural commuting group actions: the collective action of the unitary group $\mathrm{U}(d)$ and the permutation action of the symmetric group $\mathrm{S}_N$.
For $U\in \mathrm{U}(d)$, define the collective unitary action $\varphi_{\mathrm{U}(d)}(U)$ on $(\mathbb{C}^d)^{\otimes N}$ by
\begin{equation}
    \varphi_{\mathrm{U}(d)}(U)\ket{i_1\cdots i_N}
    \coleq
    U^{\otimes N}\ket{i_1\cdots i_N},
\end{equation}
which defines a representation of $\mathrm{U}(d)$ on $(\mathbb{C}^d)^{\otimes N}$.
Similarly, for $\pi\in \mathrm{S}_N$, define the permutation action $\varphi_{\mathrm{S}_N}(\pi)$ by
\begin{equation}
    \varphi_{\mathrm{S}_N}(\pi)\ket{i_1\cdots i_N}
    \coleq
    \ket{i_{\pi^{-1}(1)}\cdots i_{\pi^{-1}(N)}},
\end{equation}
which defines a representation of $\mathrm{S}_N$ on $(\mathbb{C}^d)^{\otimes N}$.
Under these commuting actions, the tensor-product Hilbert space admits the following decomposition.

\begin{lemma}[Schur-Weyl duality]
    As a representation of $\mathrm{S}_N\times\mathrm{U}(d)$,
    \begin{align}
        (\mathbb{C}^d)^{\otimes N}
        \overset{\mathrm{Schur}}{\cong}
        \bigoplus_{\lambda\vdash_d N}
        \mathrm{Sp}_\lambda\otimes \mathrm{V}_\lambda^d.
    \end{align}
    The corresponding unitary isomorphism is called the Schur transform.
\end{lemma}

A standard tool for analyzing tensor products of representations is the Clebsch-Gordan~(CG) decomposition, which characterizes their decomposition into irreducible representations as follows:
\begin{lemma}[Clebsch-Gordan~(CG) decomposition]\label{lem:cg}
    For staircases $\lambda$ and $\nu$, the tensor product of the corresponding Weyl modules $\mathrm{V}_\lambda^d$ and $\mathrm{V}_\nu^d$ decomposes as
    \begin{align}
        \mathrm{V}_\lambda^d \otimes \mathrm{V}_\nu^d
        \cong
        \bigoplus_{\mu}
        \mathrm{V}_\mu^d \otimes \mathbb{C}^{c^\mu_{\lambda,\nu}},
    \end{align}
    where the sum is over length-$d$ staircases $\mu$, and $c^\mu_{\lambda,\nu}$ is the \textit{Littlewood-Richardson~(LR) coefficient}.
\end{lemma}
\noindent
A unitary intertwiner implementing the corresponding component of this decomposition is called a CG transformation:
\begin{align}
    U_{\mu\to \lambda,\nu}^{\mathrm{CG}}
    :
    \mathrm{V}_\mu^d\otimes \mathbb{C}^{c_{\lambda,\nu}^{\mu}}
    &\to
    (\mathrm{V}_\lambda^d\otimes \mathrm{V}_\nu^d)|_{\mu},
\end{align}
where $(\mathrm{V}_\lambda^d\otimes \mathrm{V}_\nu^d)|_{\mu}$ denotes the subspace corresponding to the $\mathrm{V}_\mu^d\otimes\mathbb{C}^{c^\mu_{\lambda,\nu}}$ summand in the above decomposition.

Although computing the LR coefficients can be nontrivial in general~\cite[Appendix~A.8]{fulton2013representation}, the following simple necessary condition follows:
\begin{lemma}[Size-matching condition for the CG transformation]
    \label{lem:sizematching}
    If the staircases $\lambda$, $\nu$, and $\mu$ satisfy $c^{\mu}_{\lambda,\nu}>0$, then $|\mu|=|\lambda|+|\nu|$.
\end{lemma}
\begin{proof}
    For $\kappa\in\{\lambda,\nu,\mu\}$, let $\varphi_\kappa$ denote the group action of $\mathrm{U}(d)$ on $\mathrm{V}_\kappa^d$.
    If $c^\mu_{\lambda,\nu}>0$, the CG decomposition implies that there exists a nonzero intertwiner
    $T:\mathrm{V}_\lambda^d\otimes\mathrm{V}_\nu^d\to
    \mathrm{V}_\mu^d\otimes\mathbb{C}^{c^\mu_{\lambda,\nu}}$
    such that
    \begin{align}
        T(\varphi_\lambda(U)\otimes\varphi_\nu(U))
        =
        (\varphi_\mu(U)\otimes I_{c^\mu_{\lambda,\nu}})T
    \end{align}
    for all $U\in\mathrm{U}(d)$.
    Consider $U=zI_d$ for $z\in\mathrm{U}(1)$.
    By Eq.~\eqref{eq:z_indentity_cond},
    \begin{align}
        z^{|\lambda|+|\nu|}T=z^{|\mu|}T
    \end{align}
    holds for all $z\in\mathrm{U}(1)$.
    As $T\neq 0$, this implies $|\mu|=|\lambda|+|\nu|$, completing the proof.
\end{proof}

\section{Lower bound}
We prove that $N=\Omega(Mrd/\varepsilon)$ copies are necessary for cloning $M$ additional copies of an arbitrary $d$-dimensional state of rank at most $r$ with fidelity at least $1-\varepsilon$.
The key step in the proof is to show that it suffices to consider unitary-covariant and permutation-invariant cloning maps, and then decompose these maps into simpler component maps.
Here, a cloning map $\mathcal{T}:\mathrm{End}((\mathbb{C}^d)^{\otimes N})\to\mathrm{End}((\mathbb{C}^d)^{\otimes (N+M)})$ is called unitary-covariant if
\begin{align}
    \mathcal{T}(U^{\otimes N}XU^{\dagger\otimes N})
    =
    U^{\otimes (N+M)}\mathcal{T}(X)U^{\dagger\otimes (N+M)}
\end{align}
for all $U\in\mathrm{U}(d)$, and permutation-invariant if
\begin{align}
    \mathcal{T}(\varphi_{\mathrm{S}_N}(\sigma)X\varphi_{\mathrm{S}_N}(\sigma)^\dagger)=\mathcal{T}(X),\qquad\varphi_{\mathrm{S}_{N+M}}(\pi)\mathcal{T}(X)\varphi_{\mathrm{S}_{N+M}}(\pi)^\dagger
    =\mathcal{T}(X)
\end{align}
for all $\sigma\in\mathrm{S}_N$ and $\pi\in\mathrm{S}_{N+M}$.
We first introduce the necessary lemmas for this step and then proceed to the main proof.

We begin by showing that it suffices to consider cloning maps satisfying unitary covariance and permutation invariance.
\begin{lemma}[Symmetry reduction]\label{lem:symmetry-reduction}
    Given a cloning map $\mathcal{T}:\mathrm{End}((\mathbb{C}^d)^{\otimes N})\to\mathrm{End}((\mathbb{C}^d)^{\otimes (N+M)})$, there exists a unitary-covariant and permutation-invariant map $\bar{\mathcal{T}}$ whose worst-case cloning fidelity over states of rank at most $r$ is no smaller, i.e.,
    \begin{align}
        \inf_{\rho:\mathrm{rank}(\rho) \le r} F(\rho^{\otimes (N+M)}, \mathcal{T}(\rho^{\otimes N}))
        \leq 
        \inf_{\rho:\mathrm{rank}(\rho) \le r} F(\rho^{\otimes (N+M)}, \bar{\mathcal{T}}(\rho^{\otimes N})).
    \end{align}
\end{lemma}
\begin{proof}
    Let
    \begin{align}
        \bar{\mathcal{T}}_{\mathrm{U}(d)}(X)
        \coleq
        \int 
        U^{\dagger\otimes (N+M)}
        \mathcal{T}\!\left(U^{\otimes N}XU^{\dagger\otimes N}\right)
        U^{\otimes (N+M)}
        dU
    \end{align}
    be the unitary symmetrization of a cloning map $\mathcal{T}$, which is a well-defined unitary-covariant CPTP map.  
    Here, the integral is over the Haar measure on $\mathrm{U}(d)$.
    Using the unitary invariance of the set of states of rank at most $r$, we have
    \begin{align}
        \inf_{\rho:\mathrm{rank}(\rho) \le r} F(\rho^{\otimes (N+M)}, \mathcal{T}(\rho^{\otimes N}))
        &=\int\inf_{\rho:\mathrm{rank}(\rho) \le r} F((U\rho U^\dagger)^{\otimes (N+M)}, \mathcal{T}((U\rho U^\dagger)^{\otimes N}))dU\\
        &\leq\inf_{\rho:\mathrm{rank}(\rho) \le r}\int F((U\rho U^\dagger)^{\otimes (N+M)}, \mathcal{T}((U\rho U^\dagger)^{\otimes N}))dU\\
        &=\inf_{\rho:\mathrm{rank}(\rho) \le r}\int F(\rho^{\otimes (N+M)}, U^{\dagger\otimes (N+M)}\mathcal{T}((U\rho U^\dagger)^{\otimes N})U^{\otimes (N+M)})dU\\
        &\leq \inf_{\rho:\mathrm{rank}(\rho) \le r} F\left(\rho^{\otimes (N+M)}, \int U^{\dagger\otimes (N+M)}\mathcal{T}((U\rho U^\dagger)^{\otimes N})U^{\otimes (N+M)}dU\right)\\
        &=\inf_{\rho:\mathrm{rank}(\rho) \le r} F(\rho^{\otimes (N+M)}, \bar{\mathcal{T}}_{\mathrm{U}(d)}(\rho^{\otimes N})),
    \end{align}
    where the fourth line follows from the concavity of fidelity.
    Thus, unitary symmetrization does not decrease the worst-case fidelity.

    We similarly obtain permutation invariance by averaging over the permutation actions.
    Let
    \begin{align}
        \bar{\mathcal{T}}_{\mathrm{S}_{N+M}}(X) 
        \coleq \frac{1}{|\mathrm{S}_{N+M}|\cdot|\mathrm{S}_{N}|}\sum_{\pi\in \mathrm{S}_{N+M}} \sum_{\sigma\in \mathrm{S}_{N}} \varphi_{\mathrm{S}_{N+M}}(\pi)\mathcal{T}(\varphi_{\mathrm{S}_{N}}(\sigma) X \varphi_{\mathrm{S}_{N}}(\sigma)^\dagger)\varphi_{\mathrm{S}_{N+M}}(\pi)^\dagger
    \end{align}
    be the permutation symmetrization of a cloning map $\mathcal{T}$, which is also a well-defined permutation-invariant CPTP map.
    Then, we have
    \begin{align}
        \inf_{\rho:\mathrm{rank}(\rho) \le r} F(\rho^{\otimes (N+M)}, \mathcal{T}(\rho^{\otimes N}))
        &=\inf_{\rho:\mathrm{rank}(\rho) \le r} \frac{1}{|\mathrm{S}_{N+M}|}\sum_{\pi\in \mathrm{S}_{N+M}} F(\rho^{\otimes (N+M)}, \varphi_{\mathrm{S}_{N+M}}(\pi)\mathcal{T}(\rho^{\otimes N})\varphi_{\mathrm{S}_{N+M}}(\pi)^\dagger)\\
        &\leq \inf_{\rho:\mathrm{rank}(\rho) \le r} F\left(\rho^{\otimes (N+M)}, \frac{1}{|\mathrm{S}_{N+M}|}\sum_{\pi\in \mathrm{S}_{N+M}} \varphi_{\mathrm{S}_{N+M}}(\pi)\mathcal{T}(\rho^{\otimes N})\varphi_{\mathrm{S}_{N+M}}(\pi)^\dagger\right)\\
        &=\inf_{\rho:\mathrm{rank}(\rho) \le r} F(\rho^{\otimes (N+M)}, \bar{\mathcal{T}}_{\mathrm{S}_{N+M}}(\rho^{\otimes N}))
    \end{align}
    by the concavity of fidelity.
    Hence, permutation symmetrization also does not decrease the worst-case cloning fidelity.
    
    Finally, we consider the worst-case cloning fidelity of the composite map 
    \begin{align}
        \bar{\mathcal{T}}(X)\coleq\frac{1}{|\mathrm{S}_{N+M}|\cdot|\mathrm{S}_{N}|}\sum_{\pi\in \mathrm{S}_{N+M}} \sum_{\sigma\in \mathrm{S}_{N}} \varphi_{\mathrm{S}_{N+M}}(\pi)\bar{\mathcal{T}}_{\mathrm{U}(d)}(\varphi_{\mathrm{S}_{N}}(\sigma) X \varphi_{\mathrm{S}_{N}}(\sigma)^\dagger)\varphi_{\mathrm{S}_{N+M}}(\pi)^\dagger.
    \end{align}
    This map is permutation-invariant by construction. 
    Moreover, both input and output permutation actions commute with the corresponding collective unitary actions, so the permutation symmetrization preserves unitary covariance.
    Hence, $\bar{\mathcal{T}}$ is both unitary-covariant and permutation-invariant. Therefore, the worst-case cloning fidelity of $\bar{\mathcal{T}}$ over states of rank at most $r$ is no smaller than that of $\mathcal{T}$.
\end{proof}

We next introduce a lemma implying that cloning maps with these symmetries admit a structured decomposition.
\begin{lemma}[Decomposition of unitary-covariant and permutation-invariant maps~\cite{manvcinska2025classification}]
    \label{lem:decomposition}
    Let $\mathcal{T}:\mathrm{End}((\mathbb{C}^d)^{\otimes N})\to\mathrm{End}((\mathbb{C}^d)^{\otimes (N+M)})$ be a unitary-covariant and permutation-invariant CPTP map.
    Any permutation-invariant input operator can be written as
    \begin{align}
        X\cong\bigoplus_{\lambda\vdash_d N} \frac{{I}_{\mathrm{Sp}_\lambda}}{\dim(\mathrm{Sp}_\lambda)}\otimes X_{\lambda}, \quad X_\lambda \in \mathrm{End}(\mathrm{V}_\lambda^d).
    \end{align}
    For such an input, the output decomposes as
    \begin{align}
        \mathcal{T}(X)
        &\cong\sum_{\lambda\vdash_d N}\mathcal{T}_{\lambda}(X_\lambda),\\
        \mathcal{T}_{\lambda}
        &\in\mathrm{Conv}
        (
        \{
        {\mathcal{T}}_{\lambda\to\mu}^{\nu,\psi}:\mu\vdash_d N+M,\ket{\psi}\in\mathbb{C}^{c_{\lambda, \nu}^\mu}, \|\psi\|=1
        \}
        ),
    \end{align}
    where $\mathrm{Conv}$ denotes the convex hull and $\nu$ ranges over length-$d$ staircases.
    The component map ${\mathcal{T}}_{\lambda\to\mu}^{\nu,\psi}:\mathrm{End}(\mathrm{V}_\lambda^d)\to\mathrm{End}(\mathrm{Sp}_\mu\otimes \mathrm{V}_\mu^d)$ is given by
    \begin{align}
        \label{eq:component-map}
        {\mathcal{T}}_{\lambda\to\mu}^{\nu,\psi}(X_\lambda)
        &=\frac{\dim (\mathrm{V}_\lambda^d)}{\dim( \mathrm{V}_\mu^d)}\frac{I_{\mathrm{Sp}_\mu}}{\dim(\mathrm{Sp}_\mu)}\otimes U_{\mu,\psi\to \lambda,\nu}^{\mathrm{CG}\dagger}(X_\lambda \otimes I_{\mathrm{V}_\nu^d})U_{\mu,\psi\to \lambda,\nu}^{\mathrm{CG}},
    \end{align}
    where the isometry $U_{\mu,\psi\to \lambda,\nu}^{\mathrm{CG}}:\mathrm{V}_{\mu}^d\to\mathrm{V}_{\lambda}^d\otimes \mathrm{V}_{\nu}^d$ is defined by $U_{\mu,\psi\to \lambda,\nu}^{\mathrm{CG}}\coleq U_{\mu\to \lambda,\nu}^{\mathrm{CG}}(I_{\mathrm{V}^d_\mu}\otimes \ket{\psi})$.
\end{lemma}
\noindent Note that the original result in Ref.~\cite{manvcinska2025classification} is stated for irreps of $\mathrm{SU}(d)$ rather than $\mathrm{U}(d)$.
The extension from $\mathrm{SU}(d)$ to $\mathrm{U}(d)$ follows from the fact that $\mathrm{U}(d)$ irreps whose labels differ by a common integer shift restrict to the same $\mathrm{SU}(d)$ irrep.
More precisely, for each $\mathrm{SU}(d)$ irrep appearing in the original decomposition, one can choose a corresponding $\mathrm{U}(d)$ lift so that the actions of the central $\mathrm{U}(1)$ subgroup agree on the two sides of the intertwining isometry, which then becomes a $\mathrm{U}(d)$ intertwiner.
Consequently, the same form of the component maps applies to $\mathrm{U}(d)$, yielding Lemma~\ref{lem:decomposition}.

We are now ready to prove the lower bound.
\begin{theorem}[Lower bound]\label{thm:lower-bound}
    Let $1\leq r\leq d$ with $d\geq 2$ and $0<\varepsilon<1/30$. 
    Any cloning map that achieves worst-case fidelity at least $1-\varepsilon$ for cloning $M$ additional copies of any $d$-dimensional quantum state of rank at most $r$ requires $N=\Omega(Mrd/\varepsilon)$ input copies.
\end{theorem}
\begin{proof}
    We prove that any $N\to N+M$ cloning map $\mathcal{T}$ with
    \begin{align}
        \label{eq:epassumption}
        \inf_{\rho:\mathrm{rank}(\rho) \le r} F(\rho^{\otimes (N+M)}, \mathcal{T}(\rho^{\otimes N}))\geq 1-\varepsilon
    \end{align} 
    must satisfy $N=\Omega(Mrd/\varepsilon)$.
    We establish this by deriving an upper bound on the worst-case cloning fidelity and comparing it with the assumed lower bound $1-\varepsilon$.

    By Lemma~\ref{lem:symmetry-reduction}, we may assume that the cloning map $\mathcal{T}$ is unitary-covariant and permutation-invariant.
    To upper-bound the worst-case cloning fidelity for such a map, we consider the fidelity for a particular hard instance.
    Fix some integer $l$ with $1 \le l \le r$ and choose the rank-$l$ maximally mixed state $\rho_l \coleq (I_l\oplus 0_{d-l})/l=\Pi_l/l$ as a hard instance, where $I_l$ denotes the $ l$-dimensional identity operator, $0_{d-l}$ denotes the $d-l$-dimensional zero operator, and $\Pi_l$ denotes the projector onto the subspace spanned by the first $l$ basis vectors.
    The fidelity is upper bounded as follows:
    \begin{align}
        F(\rho_l^{\otimes (N+M)},\mathcal{T}(\rho_l^{\otimes N}))
        &=\left(\tr\left(\sqrt{\sqrt{\rho_l^{\otimes (N+M)}}\mathcal{T}(\rho_l^{\otimes N})\sqrt{\rho_l^{\otimes (N+M)}}}\right)\right)^2\\
        &=\left(\tr\left(\sqrt{\frac{1}{l^{N+M}}\Pi_l^{\otimes (N+M)}\mathcal{T}(\rho_l^{\otimes N})\Pi_l^{\otimes (N+M)}}\right)\right)^2\\
        &\leq\tr (\Pi_l^{\otimes (N+M)}\mathcal{T}(\rho_l^{\otimes N}))
    \end{align}
    where the last line follows from the Cauchy-Schwarz inequality.

    To further bound the RHS, we decompose the cloned state $\mathcal{T}(\rho_l^{\otimes N})$ in the Schur basis.
    Since the cloning map $\mathcal{T}$ is unitary-covariant and permutation-invariant, Lemma~\ref{lem:decomposition} gives the decomposition
    \begin{align}
        \mathcal{T}(X)\cong\sum_{\lambda \vdash_d N} \mathcal{T}_{\lambda}(X_\lambda),
    \end{align}
    where $\mathcal{T}_\lambda$ is a convex combination of the component maps ${\mathcal{T}}_{\lambda\to\mu}^{\nu,\psi}$ given in Eq.~\eqref{eq:component-map}, and $X_\lambda$ is the $\mathrm{V}^d_\lambda$-sector of the input $X$.
    The input state $\rho_l^{\otimes N}$ and the projector $\Pi_l^{\otimes(N+M)}$ can be decomposed in the Schur basis as
    \begin{align}
        \rho_l^{\otimes N}
        &\cong
        \bigoplus_{\lambda\vdash_l N} p_l(\lambda)\frac{{I}_{\mathrm{Sp}_\lambda}}{\dim(\mathrm{Sp}_\lambda)}\otimes \frac{P_\lambda^{d, l}}{\dim(\mathrm{V}_\lambda^l)},\\
        \Pi_l^{\otimes(N+M)}
        &\cong
        \bigoplus_{\mu\vdash_l N+M} {I}_{\mathrm{Sp}_\mu}\otimes {P_\mu^{d,l}},
    \end{align}
    where $P_\lambda^{d,l}\in\mathrm{End}(\mathrm{V}_\lambda^d)$ denotes the projector onto the $\mathrm{V}_\lambda^l$-subspace, and the occupancy probability $p_l(\lambda)$ is defined by
    \begin{align}
        p_l(\lambda)\coleq\frac{\dim(\mathrm{Sp}_\lambda)\dim(\mathrm{V}_\lambda^l)}{l^N}.
    \end{align}
    Given this, the RHS can be rewritten as
    \begin{align}
        \mathrm{RHS}
        &=\tr \left(\left(\sum_{\lambda\vdash_l N}\mathcal{T}_{\lambda}\left(p_l(\lambda)\frac{P_\lambda^{d, l}}{\dim(\mathrm{V}_\lambda^l)}\right)\right)\left(\bigoplus_{\mu\vdash_l N+M} {I}_{\mathrm{Sp}_\mu}\otimes {P_\mu^{d, l}}\right)\right)\\
        &=\tr \left(\left(\sum_{\lambda\vdash_l N}\bigoplus_{\mu\vdash_l N+M}\sum_{\nu,\psi}\alpha_{\mu,\nu,\psi} \mathcal{T}_{\lambda\to\mu}^{\nu,\psi}\left(p_l(\lambda)\frac{P_\lambda^{d, l}}{\dim(\mathrm{V}_\lambda^l)}\right)\right)\left(\bigoplus_{\mu\vdash_l N+M} {I}_{\mathrm{Sp}_\mu}\otimes {P_\mu^{d, l}}\right)\right)\\
        &=\sum_{\lambda\vdash_l N}p_l(\lambda)
        \sum_{\mu\vdash_l N+M}
        \sum_{\nu,\psi}\alpha_{\mu,\nu,\psi} \tr \left( \mathcal{T}_{\lambda\to\mu}^{\nu,\psi}\left(\frac{P_\lambda^{d, l}}{\dim(\mathrm{V}_\lambda^l)}\right)( {I}_{\mathrm{Sp}_\mu}\otimes {P_\mu^{d, l}})\right)\\
        &=\sum_{\lambda\vdash_l N}p_l(\lambda)
        \sum_{\mu\vdash_l N+M}
        \sum_{\nu,\psi}\alpha_{\mu,\nu,\psi} \frac{\dim(\mathrm{V}_\lambda^d)}{\dim(\mathrm{V}_\mu^d)\dim(\mathrm{V}_\lambda^l)} \tr (U_{\mu,\psi\to \lambda,\nu}^{\mathrm{CG}\dagger}(P_\lambda^{d, l}\otimes I_{\mathrm{V}_\nu^d})U_{\mu,\psi\to \lambda,\nu}^{\mathrm{CG}}P_\mu^{d, l}),\label{eq:alpha_sum}
    \end{align}
    where $\alpha_{\mu,\nu,\psi}\geq 0$ are coefficients of the convex combination, satisfying $\sum_{\mu\vdash_d N+M,\nu,\psi}\alpha_{\mu,\nu,\psi}=1$ for each $\lambda$-sector.
    
    To bound this expression, we use the following lemma, whose proof is the most technical part of the argument:
    \begin{lemma}[Upper bound on cloning fidelity for a component map]
        \label{lem:BCI-general}
        For partitions $\lambda\vdash_l N$ and $\mu\vdash_l N+M$ with $1\le l\le d$, a length-$d$ staircase $\nu$, and an isometric $\mathrm{U}(d)$-intertwiner $U:\mathrm{V}_\mu^d\to \mathrm{V}_\lambda^d\otimes \mathrm{V}_\nu^d$, the following inequality holds:
        \begin{align}
            \frac{\dim(\mathrm{V}_\lambda^d)}{\dim(\mathrm{V}_\mu^d)\dim(\mathrm{V}_\lambda^l)} \tr (U^\dagger(P_\lambda^{d, l}\otimes I_{\mathrm{V}_\nu^d})U P_\mu^{d, l})\leq\prod_{i=l+1}^d\frac{\lambda_1+i+M-2}{\lambda_1+i+2M-2}.
        \end{align}
    \end{lemma}
    \begin{proof}
        See Sec.~\ref{sec:l7proof}.
    \end{proof}
    Applying the lemma to Eq.~\eqref{eq:alpha_sum} and using the normalization of the coefficients $\alpha_{\mu,\nu,\psi}$, we obtain
    \begin{align}
        \mathrm{RHS}\leq \sum_{\lambda\vdash_l N} p_l(\lambda) \prod_{i=l+1}^d\frac{\lambda_1+i+M-2}{\lambda_1+i+2M-2}.
    \end{align}
    The product term is bounded by
    \begin{align}
        \prod_{i=l+1}^d\frac{\lambda_1+i+M-2}{\lambda_1+i+2M-2}
        &= \prod_{i=l+1}^d\left(1-\frac{M}{\lambda_1+i+2M-2}\right)\\
        &\leq \left(1-\frac{M}{\lambda_1+d+2M-2}\right)^{d-l}\\
        &\leq 1-\frac{M(d-l)}{\lambda_1+d+2M-2+M(d-l)},
    \end{align}
    where the last line follows from $(1-x)^n\leq (1+nx)^{-1}$.
    Thus, we can bound the RHS of Eq.~\eqref{eq:alpha_sum} by
    \begin{align}
        \mathrm{RHS}
        &\leq 1-\sum_{\lambda\vdash_l N} p_l(\lambda)\frac{M(d-l)}{\lambda_1+d+2M-2+M(d-l)}\\
        &\leq  1-\frac{M(d-l)}{\sum_{\lambda\vdash_l N}p_l(\lambda)\lambda_1+d+2M-2+M(d-l)}
    \end{align}
    where the last line follows from Jensen's inequality applied to the convex function $f(x) = 1/(x+c)$ for a constant $c$.

    It remains to bound the expectation 
    $\sum_{\lambda\vdash_l N}p_l(\lambda)\lambda_1$.
    Theorem~5.2 in Ref.~\cite{o2016efficient} states that
    \begin{align}
        \sum_{\lambda\vdash_l N}p_l(\lambda)\lambda_1
        \leq
        \frac{N}{l}+2\sqrt{N},
    \end{align}
    which yields the following upper bound on the cloning fidelity:
    \begin{align}
        F(\rho_l^{\otimes (N+M)},\mathcal{T}(\rho_l^{\otimes N}))
        &\leq 
        1-\frac{M(d-l)}{{N}/{l}+2\sqrt{N}+d+2M-2+M(d-l)}.
    \end{align}

    We turn the above upper bound into a lower bound on $N$ by comparing it with the assumed worst-case cloning fidelity $1-\varepsilon$ and choosing $l$ appropriately.
    Combining the upper bound with the fidelity assumption, we must have
    \begin{align}
        \varepsilon
        &\geq
        \frac{M(d-l)}{{N}/{l}+2\sqrt{N}+d+2M-2+M(d-l)}
    \end{align}
    for every $1\leq l\leq r$.
    Choose $l=\lceil r/2\rceil$. 
    From $1\leq r\leq d$ and $d\geq 2$, we have $l=\lceil r/2\rceil\leq \lceil d/2\rceil\leq {2d}/{3}$, which gives $d\leq 3(d-l)$ and $l\leq 2(d-l)$. 
    Suppose $N<4l^2$. 
    Then we have $N/l<4l$ and $2\sqrt{N}<4l$, leading to
    \begin{align}
        \frac{M(d-l)}{{N}/{l}+2\sqrt{N}+d+2M-2+M(d-l)}
        &>\frac{M(d-l)}{8l+d+2M-2+M(d-l)}\\
        &>\frac{M(d-l)}{19(d-l)+3M(d-l)}\\
        &\geq\frac{1}{22},
    \end{align}
    which contradicts the assumption $\varepsilon<1/30$. 
    Thus, $N\geq 4l^2$.
    This yields $2\sqrt{N}\leq N/l$, leading to
    \begin{align}
        \varepsilon
        &\geq
        \frac{M(d-l)}{{N}/{l}+2\sqrt{N}+d+2M-2+M(d-l)}\\
        &\geq
        \frac{M(d-l)}{{2N}/{l}+d+3M(d-l)}\\
        &\geq
        \frac{Ml(d-l)}{2N+ld+3Ml(d-l)}.
    \end{align}
    Using $r/2\leq l\leq r$ and $d/3\leq d-l\leq d$, 
    \begin{align}
        \varepsilon \geq \frac{Mrd/6}{2N+rd+3Mrd} \geq \frac{Mrd}{12N+24Mrd}.
    \end{align}
    The assumption $\varepsilon<1/30$ then implies
    \begin{align}
        12\varepsilon N \geq Mrd(1-24\varepsilon) \geq
        \frac{1}{5}Mrd,
    \end{align}
    and hence yields
    \begin{align}
        N
        \geq
        \frac{1}{60}\frac{Mrd}{\varepsilon}.
    \end{align}
    Therefore, $N=\Omega(Mrd/\varepsilon)$, which completes the proof.

\end{proof}

\section{Upper bound}

We prove that $N=O(Mrd/\varepsilon)$ input copies are sufficient to clone $M$ additional copies of any $d$-dimensional state of rank at most $r$ with fidelity at least $1-\varepsilon$.
The construction is based on the random purification channel~\cite{pelecanos2025mixed, tang2025conjugate}, which lifts a tensor power of a mixed state to a mixture of identical pure-state purifications.

Let $\rho$ be a state of rank at most $r$, and fix a purification $\ket{\Psi_{\rho}}\in\mathbb{C}^d\otimes\mathbb{C}^r$.
For $U\in\mathrm{U}(r)$, define the rotated purification
\begin{align}
    \ket{\Psi_{\rho,U}}\coleq (I\otimes U)\ket{\Psi_{\rho}}.
\end{align}
The random purification channel $\Phi$ is then defined by
\begin{align}
    \Phi(\rho^{\otimes N})
    &=\E_U\ketbra{\Psi_{\rho,U}}{\Psi_{\rho,U}}^{\otimes N},
\end{align}
where the expectation is taken over the Haar measure on $\mathrm{U}(r)$.
As shown in Refs.~\cite{pelecanos2025mixed, tang2025conjugate}, this prescription defines a completely positive and trace-preserving map acting on the input state $\rho^{\otimes N}$.
Moreover, its output is supported on the symmetric subspace of $(\mathbb{C}^{rd})^{\otimes N}$.

We now define the random-purification cloning map $\mathcal{T}_{\mathrm{RP}}$.
It first applies the random purification channel, then applies the optimal pure-state cloning map in dimension $rd$, and finally traces out the purification registers:
\begin{align}
    \mathcal{T}_{\mathrm{RP}}(\rho^{\otimes N})
    \coleq
    \tr_R\mathcal{T}_{\mathrm{Pure}}(\Phi(\rho^{\otimes N})).
\end{align}
Here, $\tr_R$ denotes the partial trace over the purification registers.
The pure-state cloner used above is the optimal $N\to N+M$ pure-state cloning map on the $rd$-dimensional purified Hilbert space:
\begin{align}
    \mathcal{T}_{\mathrm{Pure}}(X)
    \coleq
    \frac{\dim(\mathrm{Sym}_N^{rd})}
         {\dim(\mathrm{Sym}_{N+M}^{rd})}
    \Pi_{\mathrm{Sym}_{N+M}^{rd}}
    \left(
        X\otimes I_{rd}^{\otimes M}
    \right)
    \Pi_{\mathrm{Sym}_{N+M}^{rd}},
\end{align}
where $X$ is an operator supported on $\mathrm{Sym}_N^{rd}$, $I_{rd}$ is the identity operator on $\mathbb C^{rd}$, and $\Pi_{\mathrm{Sym}_{N+M}^{rd}}$ is the projector onto the symmetric subspace of $(\mathbb C^{rd})^{\otimes(N+M)}$.

This gives the following theorem:

\begin{theorem}[Upper bound]
    The random-purification cloning map $\mathcal{T}_{\mathrm{RP}}$ achieves worst-case fidelity at least $1-\varepsilon$ for cloning $M$ additional copies of any $d$-dimensional state of rank at most $r$ using $N=O(Mrd/\varepsilon)$ input copies.
\end{theorem}
\begin{proof}
    We have
    \begin{align}
        F(\rho^{\otimes (N+M)}, \mathcal{T}_{\mathrm{RP}}(\rho^{\otimes N}))
        &=F(\tr_R \E_U\ketbra{\Psi_{\rho,U}}{\Psi_{\rho,U}}^{\otimes (N+M)}, \tr_R\mathcal{T}_{\mathrm{Pure}}(\E_U\ketbra{\Psi_{\rho,U}}{\Psi_{\rho,U}}^{\otimes N}))\\
        &\geq F(\E_U\ketbra{\Psi_{\rho,U}}{\Psi_{\rho,U}}^{\otimes (N+M)}, \mathcal{T}_{\mathrm{Pure}}(\E_U\ketbra{\Psi_{\rho,U}}{\Psi_{\rho,U}}^{\otimes N}))\\
        &=\frac{\dim(\mathrm{Sym}_N^{rd})}{\dim(\mathrm{Sym}_{N+M}^{rd})} F(\E_U\ketbra{\Psi_{\rho,U}}{\Psi_{\rho,U}}^{\otimes (N+M)},\E_U\ketbra{\Psi_{\rho,U}}{\Psi_{\rho,U}}^{\otimes N}\otimes I_{rd}^{\otimes M})\\
        & \geq \frac{\dim(\mathrm{Sym}_N^{rd})}{\dim(\mathrm{Sym}_{N+M}^{rd})}\left(\E_U\sqrt{F(\ketbra{\Psi_{\rho,U}}{\Psi_{\rho,U}}^{\otimes (N+M)},\ketbra{\Psi_{\rho,U}}{\Psi_{\rho,U}}^{\otimes N}\otimes I_{rd}^{\otimes M})}\right)^2\\
        &=\frac{\dim(\mathrm{Sym}_N^{rd})}{\dim(\mathrm{Sym}_{N+M}^{rd})}\\
        &=\frac{\binom{rd+N-1}{N}}{\binom{rd+N+M-1}{N+M}},
    \end{align}
    where the second line follows from the monotonicity of fidelity under the CPTP map $\tr_R$ and the fourth line follows from the joint concavity of root fidelity.
    The RHS can be bounded by
    \begin{align}
        \mathrm{RHS}
        &=
        \prod_{i=1}^{M}
        \frac{N+i}{N+rd+i-1}\\
        &=
        \prod_{i=1}^{M}
        \left(
            1-\frac{rd-1}{N+rd+i-1}
        \right)\\
        &\geq
        1-\sum_{i=1}^{M}
        \frac{rd-1}{N+rd+i-1}\\
        &\geq
        1-\frac{Mrd}{N},
    \end{align}
    where the third line follows from $\prod_{i=1}^{M}(1-x_i)\geq 1-\sum_{i=1}^{M}x_i$ for $0\leq x_i \leq 1$.
    Hence, $N\geq Mrd/\varepsilon$ is sufficient to guarantee fidelity at least $1-\varepsilon$, completing the proof.
\end{proof}

\section{Relation with tomography}

We prove that high-fidelity tomography can be coherently converted into high-fidelity cloning.
We note that this is a generalization of Corollary 1.3 in Ref.~\cite{fefferman2025hardness}, where a similar result was proved for pure states.
The generalization is achieved by considering the cloning fidelity for purifications of the mixed states and relating it to the mixed state cloning fidelity.

\begin{theorem}[Tomography to cloning]
    For any class of quantum states, suppose there exists a tomography algorithm that estimates the state with fidelity at least $1-\varepsilon$ using $N$ samples with $\delta=O(\varepsilon)$ failure probability.
    Then there exists an $N\to N+1$ cloning map that produces an additional copy with fidelity at least $1-c\varepsilon$ for some constant $c$, using the same $N$ input copies.
\end{theorem}

\begin{proof}
    We construct the cloning map by sequentially performing a coherent tomography measurement, a controlled preparation of an additional copy, and an uncomputation of the coherent tomography measurement.

    Let $\{K_x\}$ be Kraus operators associated with the tomography POVM, satisfying $\sum_x K_x^\dagger K_x=I_d^{\otimes N}$.
    On input $\rho^{\otimes N}$, outcome $x$ occurs with probability
    \begin{align}
        p_x=\tr(K_x\rho^{\otimes N}K_x^\dagger),
    \end{align}
    and produces an estimate $\hat{\rho}_x$ satisfying
    \begin{align}
        \Pr_{x\sim \{p_x\}_x}\left[F(\rho,\hat{\rho}_x)< 1-\varepsilon\right]\leq \delta
    \end{align}
    for every $\rho$ in the class.
    The measurement can be implemented coherently by the isometry
    \begin{align}
        V=\sum_x K_x\otimes \ket{x}.
    \end{align}
    We define the cloning map $\mathcal{T}$ as follows: apply $V$ to the input, prepare $\hat{\rho}_x$ on an additional register controlled on the coherent outcome register $\ket{x}$, and then apply $V^\dagger$ to uncompute the tomography procedure; see Fig.~\ref{fig:sup-circuit}.

    \begin{figure}[h!]
        \centering
        \includegraphics{fig/fig-circuit.pdf}
        \caption{Cloning map $\mathcal{T}$ constructed from a tomography algorithm.}
        \label{fig:sup-circuit}
    \end{figure}
    \noindent Note that the use of the uncomputation $V^\dagger$ is well justified, since the isometry $V$ can always be extended to a unitary on a larger Hilbert space, whose inverse implements the required uncomputation.

    We now lower bound the fidelity between the output of this map and the ideal state $\rho^{\otimes (N+1)}$.
    By Uhlmann's theorem, it suffices to establish the same fidelity bound for suitable purifications of $\rho^{\otimes (N+1)}$ and $\mathcal{T}(\rho^{\otimes N})$.
    Choose a purification $\ket{\Psi_{\rho^{\otimes N}}}$ of the input state and define the coherent implementation of the measurement on the purified system by
    \begin{align}
        W=\sum_x K_x\otimes I\otimes\ket{x}.
    \end{align}
    %
    The purified version of the above cloning procedure, denoted by $\mathcal{T}_{\mathrm{Purified}}$, applies $W$, prepares a purification $\ket{\Psi_{\hat{\rho}_x}}$ of the estimated state on an additional system, and then applies $W^\dagger$; see Fig.~\ref{fig:sup-circuit-purified}.

    \begin{figure}[h!]
    \centering
    \includegraphics{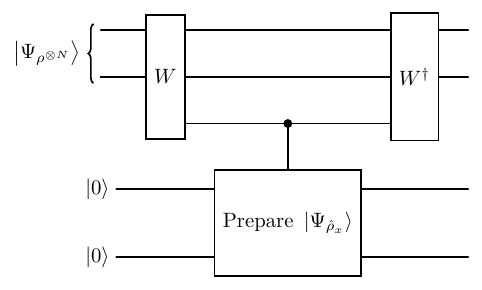}
    \caption{Purified cloning map $\mathcal{T}_{\mathrm{Purified}}$ constructed from a tomography algorithm.}
    \label{fig:sup-circuit-purified}
    \end{figure}

    The output state $\mathcal{T}_{\mathrm{Purified}}(\ket{\Psi_{\rho^{\otimes N}}})$ is a purification of $\mathcal{T}(\rho^{\otimes N})$.
    Thus, it suffices to show that
    $F(\ket{\Psi_{{\rho}^{\otimes (N+1)}}}, \mathcal{T}_{\mathrm{Purified}}(\ket{\Psi_{\rho^{\otimes N}}}))\geq 1-c\varepsilon$
    holds for all $\rho$.
    The output state is written as
    \begin{align}
        \mathcal{T}_{\mathrm{Purified}}(\ket{\Psi_{\rho^{\otimes N}}})
        &=(W^\dagger\otimes I)\left(\sum_x(K_x\otimes I)\ket{\Psi_{\rho^{\otimes N}}}\otimes \ket{x}\otimes\ket{\Psi_{\hat{\rho}_x}}\right),
    \end{align}
    while a purification of the ideal output state $\rho^{\otimes (N+1)}$ can be written as
    \begin{align}
        \ket{\Psi_{\rho^{\otimes (N+1)}}}=(W^\dagger\otimes I)\left(\sum_x(K_x\otimes I)\ket{\Psi_{\rho^{\otimes N}}}\otimes \ket{x}\otimes\ket{\Psi_\rho}\right).
    \end{align}
    Therefore, we obtain
    \begin{align}
        F(\rho^{\otimes (N+1)}, \mathcal{T}(\rho^{\otimes N}))
        &\geq F(\ket{\Psi_{{\rho}^{\otimes (N+1)}}}, \mathcal{T}_{\mathrm{Purified}}(\ket{\Psi_{\rho^{\otimes N}}}))\\
        &= F\left(\sum_x(K_x\otimes I)\ket{\Psi_{\rho^{\otimes N}}}\otimes \ket{x}\otimes\ket{\Psi_\rho}, \sum_x(K_x\otimes I)\ket{\Psi_{\rho^{\otimes N}}}\otimes \ket{x}\otimes\ket{\Psi_{\hat{\rho}_x}}\right)\\
        &=\left|\sum_x\tr(K_x\rho^{\otimes N}K_x^\dagger)\braket{\Psi_\rho}{\Psi_{\hat{\rho}_x}}\right|^2.
    \end{align}
    By choosing the canonical purifications,
    $\ket{\Psi_\rho}=\sqrt{d}(\sqrt{\rho}\otimes I)\ket{\Gamma}$ and $\ket{\Psi_{\hat{\rho}_x}}=\sqrt{d}(\sqrt{\hat{\rho}_x}\otimes I)\ket{\Gamma}$,
    where $\ket{\Gamma}$ is a maximally entangled state, we obtain
    \begin{align}
        \mathrm{RHS}
        &=\left|\sum_x\tr(K_x\rho^{\otimes N}K_x^\dagger)\tr(\sqrt{\rho}\sqrt{\hat{\rho}_x})\right|^2\\
        &\geq\left|\sum_x\tr(K_x\rho^{\otimes N}K_x^\dagger)F({\rho},{\hat{\rho}_x})\right|^2\\
        &\geq((1-\delta)(1-\varepsilon))^2
    \end{align}
    where the second line follows from Theorem 3 of Ref.~\cite{audenaert2012comparisons}.
    Thus, if $\delta\leq c'\varepsilon$ for some constant $c'$, the fidelity is at least $1-c\varepsilon$ for $c=2(c'+1)$, completing the proof.

\end{proof}

\section{Proof of technical lemmas}

We provide proofs of the technical lemmas needed to establish the main result in Theorem~\ref{thm:lower-bound}.
Since the logical dependencies among these proofs are somewhat complex, we first outline the organization of this section.
The section proceeds as follows.
We first prove Lemma~\ref{lem:BCI-general} in Sec.~\ref{sec:l7proof}, which provides the key inequality needed for Theorem~\ref{thm:lower-bound}.
Its proof proceeds by induction, with the induction step reduced to Lemma~\ref{lem:bci_one_step}.
We then prove Lemma~\ref{lem:bci_one_step} in Sec.~\ref{sec:l8proof} by reducing it to three auxiliary lemmas: Lemmas~\ref{lem:step1}, \ref{lem:step2}, and \ref{lem:step3}.
Finally, we prove these three lemmas in Secs.~\ref{sec:l9proof}, \ref{sec:l10proof}, and \ref{sec:l11proof}, respectively.
The overall structure of the proof of Theorem~\ref{thm:lower-bound} is illustrated in Fig.~\ref{fig:proof-diagram}.

\begin{figure}[hb]
    \centering
    \includegraphics{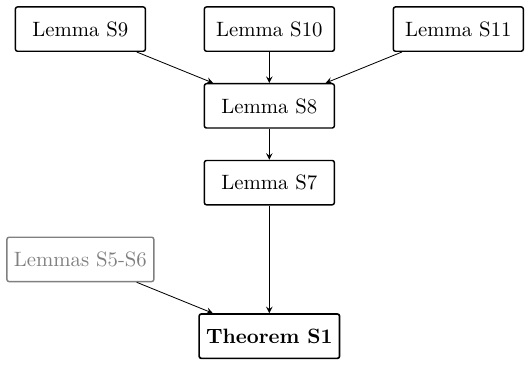}
    \caption{Proof structure for Theorem~\ref{thm:lower-bound}. An arrow indicates that the source result is used to prove the target result.}
    \label{fig:proof-diagram}
\end{figure}

We also summarize the notation used throughout this section in Table~\ref{tab:notation}.

\begin{table}[ht]
  \caption{Notation}
  \label{tab:notation}
  \centering

  \begin{tabular}{p{0.2\columnwidth} p{0.7\columnwidth}}
    \hline
    \textbf{Symbol} & \textbf{Description} \\
    \hline
    $\lambda,\mu,\nu, \kappa, \gamma$
    &
    Partitions $\lambda,\mu$ with $|\mu|=|\lambda|+M$; a length-$d$ staircase $\nu$ with $|\nu|=M$; and general length-$d$ and length-$(d-1)$ staircases $\kappa$ and $\gamma$, respectively, satisfying $\gamma\prec\kappa$.
    
    \\
    $\mathbb C_q$
    &
    The one-dimensional representation of $\mathrm U(1)$ defined by $z\mapsto z^q$.
    \\
    $J_{\gamma,\kappa}$
    &
    The inclusion isometry $J_{\gamma,\kappa}:\mathrm V_\gamma^{d-1}\to \mathrm V_\kappa^d$ for $\gamma\prec\kappa$.

     \\
    $J_\lambda$, $J_\mu$
    &
    Shorthand for $J_{\lambda,\lambda}$, $J_{\mu,\mu}$.
    \\
    
    $P^{d, d-1}_{\kappa,\gamma}$
    &
    The projector onto the $\mathrm{V}_\gamma^{d-1}$-subspace of $\mathrm{V}_\kappa^d$,
    $P^{d,d-1}_{\kappa,\gamma} \coleq J_{\gamma,\kappa} J_{\gamma, \kappa}^\dagger$.
    \\
    $P^{d,l}_{\kappa}$
    &
    The projector onto the $\mathrm{V}_\kappa^l$-subspace of $\mathrm{V}_\kappa^d$.
    \\
    $Q_{\kappa,q}$
    &
    The projector onto the weight-$q$ subspace of $\mathrm{V}_\kappa^d$,
    $Q_{\kappa,q}\coleq\sum_{\substack{\gamma\prec\kappa, 
    |\kappa|-|\gamma|=q}}P_{\kappa,\gamma}^{d,d-1}$, $q \in \mathbb{Z}$.
    \\

    $U$
    &
    An isometric $\mathrm U(d)$-intertwiner, $U : \mathrm{V}_{\mu}^d \to \mathrm{V}_\lambda^d \otimes \mathrm{V}_\nu^d$.
    \\
    $K_\gamma$, $c_\gamma$
    &
    $K_\gamma\coleq(J_\lambda^\dagger\otimes
    J_{\gamma,\nu}^\dagger)UJ_\mu$, $K_\gamma^\dagger K_\gamma= c_\gamma I_{\mathrm{V}_\mu^{d-1}}$.
    \\
    $B_q$
    &
    $B_q\coleq(Q_{\lambda,0}\otimes Q_{\nu,q})UQ_{\mu,q}$ for $q \in \mathbb{Z}$.
    \\
    $B_{0,\gamma}$
    &
    $B_{0,\gamma}\coleq (I_{\mathrm{V}_\lambda^d}\otimes
    P_{\nu,\gamma}^{d,d-1})B_0$ for $\gamma \prec \nu$ with $|\nu| - |\gamma| = 0$.
    \\
    $E_\kappa^{i,j}$
    &
    The infinitesimal action on $\mathrm{V}_\kappa^d$ corresponding to the standard matrix unit $|i\rangle\langle j|\in\mathfrak{gl}_d(\mathbb C)$.
    \\
    $e_s$
    &
    The length-$(d-1)$ tuple with all entries zero except for a one in the $s$-th position.
    \vspace{0.1em}
    \\
    \hline
  \end{tabular}
\end{table}

\subsection{Proof of Lemma~\ref{lem:BCI-general}}
\label{sec:l7proof}
We show that
    \begin{align}
    \label{eq:BCI-general}
        \frac{\dim(\mathrm{V}_\lambda^d)}{\dim(\mathrm{V}_\mu^d)\dim(\mathrm{V}_\lambda^l)} \tr (U^\dagger(P_\lambda^{d, l}\otimes I_{\mathrm{V}_\nu^d})U P_\mu^{d, l})\leq\prod_{i=l+1}^d\frac{\lambda_1+i+M-2}{\lambda_1+i+2M-2}
    \end{align}
holds for partitions $\lambda\vdash_l N$ and $\mu\vdash_l N+M$ with $1\le l\le d$, a length-$d$ staircase $\nu$, and an isometric $\mathrm{U}(d)$-intertwiner $U:\mathrm{V}_\mu^d\to \mathrm{V}_\lambda^d\otimes \mathrm{V}_\nu^d$.

We prove this by induction on $d-l$.
In the base case $d=l$, the product on the RHS is empty and hence equals one.
Thus, the inequality follows from
\begin{align}
     \frac{1}{\dim(\mathrm{V}_\mu^d)} \tr (U^\dagger(P_\lambda^{d, d}\otimes I_{\mathrm{V}_\nu^d})U P_\mu^{d, d})
    &=\frac{1}{\dim(\mathrm{V}_\mu^d)}\tr(U^\dagger(I_{\mathrm{V}_\lambda^d}\otimes I_{\mathrm{V}_\nu^d})U I_{\mathrm{V}_\mu^d})\\
    &=\frac{1}{\dim(\mathrm{V}_\mu^d)}\tr(I_{\mathrm{V}_\mu^d})\\
    &=1.
\end{align}

Suppose $d>l$ and assume that the claim holds in dimension $d-1$.
We decompose the LHS of Eq.~\eqref{eq:BCI-general} using the branching rule in Eq.~\eqref{eq:u_restrict}, apply the induction hypothesis, and reduce the desired inequality to a one-step inequality. 
To this end, we choose the Schur transform convention in which
$P_\lambda^{d,l}$ projects onto the subspace spanned by the GT basis elements
obtained by adjoining $d-l$ rows of $\lambda$ above a GT pattern of $\mathrm{V}_\lambda^l$, \textit{i.e.}, $P_\lambda^{d, l}=JJ^\dagger$ for an inclusion isometry $J:\mathrm{V}_\lambda^l\to \mathrm{V}_\lambda^d$ satisfying
\begin{align}
    J:\quad
    \left(
    \begin{array}{c}
        \gtbox{3.5cm}{{\lambda}}\\[0.35em]
        \gtbox{2.5cm}{{\lambda}^{l-1}}\\[0.35em]
        \vdots\\[0.35em]
        \gtbox{0.5cm}{{\lambda}^1}
    \end{array}
    \right)
    \to
    \left(
    \begin{array}{c}
        \gtbox{3.5cm}{{\lambda}}~
        \gtbox{3cm}{0^{d-l}}\\[0.35em]
        \gtbox{3.5cm}{{\lambda}}~
        \gtbox{2.0cm}{0^{d-l-1}}\\[0.35em]
        \qquad\ddots\qquad\qquad\qquad\qquad\vdots\\[0.35em]
        \gtbox{3.5cm}{{\lambda}}\\[0.35em]
        \gtbox{2.5cm}{{\lambda}^{l-1}}\\[0.35em]
        \vdots\\[0.35em]
        \gtbox{0.5cm}{{\lambda}^1}
    \end{array}
    \right),
\end{align}
where each row interlaces with the row immediately above it, and $0^k$ denotes the length-$k$ zero vector.
Then, for each $\gamma \prec \lambda$, the $(d-1)\to d$ inclusion isometry $J_{\gamma,\lambda}:\mathrm{V}_\gamma^{d-1}\to\mathrm{V}_\lambda^d$ acts as follows:
\begin{align}
    J_{\gamma,\lambda}:\quad
    \left(
    \begin{array}{c}
        \gtbox{3.5cm}{{\gamma}}\\[0.35em]
        \gtbox{2.5cm}{{\gamma}^{d-2}}\\[0.35em]
        \vdots\\[0.35em]
        \gtbox{0.5cm}{{\gamma}^1}
    \end{array}
    \right)
    \to
    \left(
    \begin{array}{c}
        \gtbox{4.5cm}{{\lambda}}\\[0.35em]
        \gtbox{3.5cm}{{\gamma}}\\[0.35em]
        \gtbox{2.5cm}{{\gamma}^{d-2}}\\[0.35em]
        \vdots\\[0.35em]
        \gtbox{0.5cm}{{\gamma}^1}
    \end{array}
    \right).
\end{align}

Using this convention, we decompose the LHS of Eq.~\eqref{eq:BCI-general}.
Since $d>l$, we have $\len(\lambda),\len(\mu)\leq l\le d-1$, allowing us to rewrite the projectors as follows:
\begin{align}
    P_{\lambda}^{d,l}=
    J_{\lambda} P_\lambda^{d-1, l} J_{\lambda}^\dagger,
    \qquad P_{\mu}^{d,l}=
    J_{\mu} P_\mu^{d-1, l} J_{\mu}^\dagger,
\end{align}
where we use the shorthand
$J_\lambda\coleq J_{\lambda,\lambda}$ and
$J_\mu\coleq J_{\mu,\mu}$ for simplicity.
The identity operator $I_{\mathrm{V}_\nu^d}$ similarly decomposes as
\begin{align}
    \label{eq:nu_id}
    I_{\mathrm{V}_\nu^d}
    = \sum_{\gamma\prec\nu} J_{\gamma,\nu} J_{\gamma,\nu}^\dagger.
\end{align}
For each $\gamma \prec \nu$, define the composite map $K_\gamma:\mathrm{V}_\mu^{d-1}\to\mathrm{V}_\lambda^{d-1}\otimes \mathrm{V}_\gamma^{d-1}$ by
\begin{align}
    K_\gamma\coleq (J_{\lambda}^\dagger\otimes J_{\gamma,\nu}^\dagger)U J_{\mu}.
    \label{eq:k_define}
\end{align}
Then, the trace term on the LHS of the lemma decomposes into a sum over $\gamma\prec\nu$:
\begin{align}
    \tr (U^\dagger(P_\lambda^{d, l}\otimes I_{\mathrm{V}_\nu^d})U P_\mu^{d, l})
    &=\tr \left(U^\dagger\left(J_{\lambda} P_\lambda^{d-1, l} J_{\lambda}^\dagger\otimes \sum_{\gamma\prec\nu} J_{\gamma,\nu} J_{\gamma,\nu}^\dagger\right)U 
    J_{\mu} P_\mu^{d-1, l} J_{\mu}^\dagger\right)\\
    &=\sum_{\gamma\prec\nu} \tr (K_\gamma^\dagger(P_\lambda^{d-1, l}\otimes I_{\mathrm{V}_\gamma^{d-1}})K_\gamma P_\mu^{d-1, l}).
    \label{eq:summand}
\end{align}

We next show that each nonzero $K_\gamma$ can be normalized to an isometric $\mathrm{U}(d-1)$-intertwiner, which allows us to apply the induction hypothesis.
From the $\mathrm{U}(d)\downarrow \mathrm{U}(d-1)\times\mathrm{U}(1)$ branching rule in Eq.~\eqref{eq:u_restrict}, the isometries $J_{\lambda}$, $J_{\gamma,\nu}$, and $J_{\mu}$ are $\mathrm{U}(d-1)$-intertwiners when the representations are restricted to $\mathrm{U}(d-1)$.
Moreover, since $U$ is a $\mathrm{U}(d)$-intertwiner, it is also a $\mathrm{U}(d-1)$-intertwiner under the same restriction.
Consequently, $K_\gamma$ is a $\mathrm{U}(d-1)$-intertwiner.
Schur's lemma then gives $K_\gamma^\dagger K_\gamma=c_\gamma I_{\mathrm{V}_{\mu}^{d-1}}$ for some $c_\gamma \ge 0$. 
Here, we can assume $c_\gamma>0$ since otherwise, the corresponding summand in Eq.~\eqref{eq:summand} vanishes. 
For $c_\gamma >0$, the map $c_\gamma^{-1/2}K_\gamma$ is an isometric $\mathrm{U}(d-1)$-intertwiner.
Together, we obtain the following upper bound on the LHS of Eq.~\eqref{eq:BCI-general}:

\begin{align}
    &\frac{\dim(\mathrm{V}_\lambda^d)}{\dim(\mathrm{V}_\lambda^l)\dim(\mathrm{V}_\mu^d)} \tr \left(U^\dagger(P_\lambda^{d, l}\otimes I_{\mathrm{V}_\nu^d})U P_\mu^{d, l}\right)\notag\\
    &\quad=\frac{\dim(\mathrm{V}_\lambda^d)}{\dim(\mathrm{V}_\lambda^l)\dim(\mathrm{V}_\mu^d)}\sum_{\gamma\prec\nu} c_\gamma \tr \left((c_\gamma^{-1/2}K_\gamma)^\dagger(P_\lambda^{d-1, l}\otimes I_{\mathrm{V}_\gamma^{d-1}})(c_\gamma^{-1/2}K_\gamma) P_\mu^{d-1, l}\right)\\
    &\quad\leq\frac{\dim(\mathrm{V}_\lambda^d)}{\dim(\mathrm{V}_\lambda^l)\dim(\mathrm{V}_\mu^d)}
    \frac{\dim(\mathrm{V}_\lambda^{l})\dim(\mathrm{V}_\mu^{d-1})}{\dim(\mathrm{V}_\lambda^{d-1})} \prod_{i=l+1}^{d-1}\frac{\lambda_1+i+M-2}{\lambda_1+i+2M-2} \sum_{\gamma\prec \nu}  c_\gamma\\
    &\quad=\frac{\dim(\mathrm{V}_\lambda^d ) \dim(\mathrm{V}_\mu^{d-1})}{\dim(\mathrm{V}_\mu^d)\dim(\mathrm{V}_\lambda^{d-1})}\prod_{i=l+1}^{d-1}\frac{\lambda_1+i+M-2}{\lambda_1+i+2M-2}\sum_{\gamma\prec\nu}\frac{\tr(K^\dagger_\gamma K_\gamma)}{\dim(\mathrm{V}_{\mu}^{d-1})}\\
    &\quad=\frac{\dim(\mathrm{V}_\lambda^d )}{\dim(\mathrm{V}_\mu^d)\dim(\mathrm{V}_\lambda^{d-1})}\prod_{i=l+1}^{d-1}\frac{\lambda_1+i+M-2}{\lambda_1+i+2M-2}\sum_{\gamma\prec \nu}\tr(U^\dagger(J_{\lambda}J_{\lambda}^\dagger\otimes J_{\gamma,\nu}J_{\gamma,\nu}^\dagger)UJ_{\mu} J_{\mu}^\dagger)\\
    &\quad=\frac{\dim(\mathrm{V}_\lambda^d )}{\dim(\mathrm{V}_\mu^d)\dim(\mathrm{V}_\lambda^{d-1})}\prod_{i=l+1}^{d-1}\frac{\lambda_1+i+M-2}{\lambda_1+i+2M-2}\tr(U^\dagger(P_{\lambda}^{d,d-1}\otimes I_{\mathrm{V}_\nu^d})U P_{\mu}^{d,d-1}),
\end{align}
where the inequality follows from the induction hypothesis with dimensions $d-1$ and $l$.

We now apply the following lemma:
\begin{lemma}[One-step inequality]\label{lem:bci_one_step}
    For partitions $\lambda\vdash_{d-1} N$ and $\mu\vdash_{d-1} N+M$, a length-$d$ staircase $\nu$, and an isometric $\mathrm{U}(d)$-intertwiner $U:\mathrm{V}_\mu^d\to \mathrm{V}_\lambda^d\otimes \mathrm{V}_\nu^d$, the following inequality holds:
    \begin{align}
    \frac{\dim(\mathrm{V}_\lambda^d)}{\dim(\mathrm{V}_\mu^d)\dim(\mathrm{V}_\lambda^{d-1})}
    \tr (
        U^\dagger(P_\lambda^{d,d-1}\otimes I_{\mathrm{V}_\nu^d})
        U P_\mu^{d,d-1}
    )
    \leq 
    \frac{\lambda_1+d+M-2}{\lambda_1+d+2M-2}.
    \label{eq:bci_one_step}
    \end{align}
\end{lemma}
\begin{proof}
    See Sec.~\ref{sec:l8proof}.
\end{proof}
\noindent
Applying Lemma~\ref{lem:bci_one_step} to the preceding bound, we obtain
\begin{align}
    \frac{\dim(\mathrm{V}_\lambda^d)}{\dim(\mathrm{V}_\mu^d)\dim(\mathrm{V}_\lambda^l)} \tr (U^\dagger(P_\lambda^{d, l}\otimes I_{\mathrm{V}_\nu^d})U P_\mu^{d, l})
    &\leq \left(\prod_{i=l+1}^{d-1}\frac{\lambda_1+i+M-2}{\lambda_1+i+2M-2}\right)\frac{\lambda_1+d+M-2}{\lambda_1+d+2M-2}\\
    &=\prod_{i=l+1}^{d}\frac{\lambda_1+i+M-2}{\lambda_1+i+2M-2},
\end{align}
which completes the proof.

\subsection{Proof of Lemma~\ref{lem:bci_one_step}}
\label{sec:l8proof}

    The proof proceeds in two parts: first, we rewrite the inequality as a simpler matrix-norm inequality, and second, we prove the new inequality using Lie algebra representation theory.

    We first rewrite the trace term on the LHS of Eq.~\eqref{eq:bci_one_step} in a more convenient form.
    The key observation is that the intertwiner $U$ preserves the $\mathrm{U}(1)$-weight associated with the restriction $\mathrm{U}(d)\downarrow \mathrm{U}(d-1)\times\mathrm{U}(1)$.
    To see this, consider a basis vector
    \begin{align}
        \ket{v}=
        \left(
        \begin{array}{c}
            \gtbox{2.5cm}{{\mu}}\\[0.35em]
            \gtbox{1.5cm}{{\mu^{d-1}}}\\[0.35em]
            \vdots
        \end{array}
        \right)
    \end{align}
    of $\mathrm{V}_\mu^d$.
    By the branching rule in Eq.~\eqref{eq:u_restrict}, under the action of $I_{d-1}\oplus z$ with $z\in\mathrm{U}(1)$, the vector $\ket{v}$ has $\mathrm{U}(1)$-weight $|\mu|-|\mu^{d-1}|$:
    \begin{align}
        \varphi_{\mu}(I_{d-1}\oplus z)\ket{v}
        =
        z^{|\mu|-|\mu^{d-1}|}\ket{v}.
    \end{align}
    Using the intertwining property of $U$, we then have
    \begin{align}
        z^{|\mu|-|\mu^{d-1}|}U\ket{v}
        &=
        U\varphi_{\mu}(I_{d-1}\oplus z)\ket{v}\\
        &=
        (
            \varphi_{\lambda}(I_{d-1}\oplus z)
            \otimes
            \varphi_{\nu}(I_{d-1}\oplus z)
        )U\ket{v}
    \end{align}
    for every $z\in\mathrm{U}(1)$.
    Hence, $U\ket{v}$ is supported only on tensor-product basis vectors whose total $\mathrm{U}(1)$-weight equals that of $\ket{v}$, \textit{i.e.},
    \begin{align}
        \label{eq:u_preserve}
        U\ket{v}
        \in\mathrm{span}\left(\left\{
        \left(
        \begin{array}{c}
            \gtbox{2.5cm}{{\lambda}}\\[0.35em]
            \gtbox{1.5cm}{{\lambda^{d-1}}}\\[0.35em]
            \vdots
        \end{array}
        \right)
        \otimes
        \left(
        \begin{array}{c}
            \gtbox{2.5cm}{{\nu}}\\[0.35em]
            \gtbox{1.5cm}{{\nu^{d-1}}}\\[0.35em]
            \vdots
        \end{array}
        \right)
        :
        |\lambda|-|\lambda^{d-1}|
        +
        |\nu|-|\nu^{d-1}|
        =
        |\mu|-|\mu^{d-1}|
        \right\}\right).
    \end{align}
    In other words, the intertwiner $U$ preserves the total $\mathrm{U}(1)$-weight under the restricted action.

    This $\mathrm{U}(1)$-weight preservation allows us to characterize the trace term of interest in a more convenient form.
    Let $P^{d,d-1}_{\nu,\gamma}\coleq J_{\gamma,\nu}J_{\gamma,\nu}^\dagger$
    and denote the projector onto the weight-$q$ subspace by
    \begin{align}
        Q_{\nu,q}
        \coleq
        \sum_{\substack{\gamma\prec\nu\\|\nu|-|\gamma|=q}}
        P^{d,d-1}_{\nu,\gamma}.
    \end{align}
    Since $P_\lambda^{d,d-1}$ and $P_\mu^{d,d-1}$ project onto the corresponding weight-zero subspaces, only the weight-zero component of $I_{\mathrm{V}_\nu^d}$ contributes, yielding
    \begin{align}
        (P_\lambda^{d,d-1}\otimes I_{\mathrm{V}_\nu^d})
            U P_\mu^{d,d-1}
        =
        (P_\lambda^{d,d-1}\otimes Q_{\nu,0})
        U P_\mu^{d,d-1}.
    \end{align}
    Here, from the assumption, $\lambda$ and $\mu$ are both partitions of length at most $d-1$, and hence the images of $P_\lambda^{d,d-1}$ and $P_\mu^{d,d-1}$ are precisely the corresponding weight-zero subspaces.
    This yields $P_\lambda^{d,d-1}=Q_{\lambda,0}$ and $P_\mu^{d,d-1}=Q_{\mu,0}$, which implies the following form of the trace term:
    \begin{align}
        \tr (
            U^\dagger(P_\lambda^{d,d-1}\otimes I_{\mathrm{V}_\nu^d})
            U P_\mu^{d,d-1}
        )
        =
        \tr (
            U^\dagger(Q_{\lambda,0}\otimes Q_{\nu,0})
            U Q_{\mu,0}
        ).
    \end{align}
    Equivalently, denoting
    \begin{align}
        B_0\coleq (Q_{\lambda,0}\otimes Q_{\nu,0})
            U Q_{\mu,0},
    \end{align}
    we obtain
    \begin{align}
            \tr (
                U^\dagger(P_\lambda^{d,d-1}\otimes I_{\mathrm{V}_\nu^d})
                U P_\mu^{d,d-1}
            )
            =\|B_0\|_{\mathrm{HS}}^2,
    \end{align}
    where $\|\cdot\|_{\mathrm{HS}}$ denotes the Hilbert-Schmidt norm, given by $\|A\|_{\mathrm{HS}}\coleq(\tr(A^\dagger A))^{1/2}$.

    The same weight-based decomposition also allows us to rewrite the prefactor on the LHS of Eq.~\eqref{eq:bci_one_step}.
    For each weight $q$, define
    \begin{align}
        B_q\coleq (Q_{\lambda,0}\otimes Q_{\nu,q})UQ_{\mu,q}.
    \end{align}
    The inverse of the prefactor can then be expressed as the sum of the squared Hilbert-Schmidt norms of the operators $B_q$:
    \begin{align}
         \sum_{q\in\mathbb Z}\| B_q\|_{\mathrm{HS}}^2
         &= \sum_{q \in \mathbb{Z}} \tr \left( U^\dagger (Q_{\lambda,0} \otimes Q_{\nu,q} )U Q_{\mu, q}\right)\\
         &=\sum_{q \in \mathbb{Z}} \tr \left( U^\dagger (P_\lambda^{d,d-1} \otimes I_{\mathrm{V}_\nu^d} )U Q_{\mu, q}\right)\\
         &=\tr \left( U^\dagger (P_\lambda^{d,d-1} \otimes I_{\mathrm{V}_\nu^d} )U I_{\mathrm{V}_\mu^d}\right)\\
         &=\tr\left((P_\lambda^{d,d-1}\otimes I_{\mathrm{V}_\nu^d})UU^\dagger\right)\\
         &= \int_{\mathrm{U}(d)} \tr \left( (\varphi_\lambda(W) P_{\lambda}^{d, d-1} \varphi_\lambda(W)^\dagger \otimes I_{\mathrm{V}_\nu^d}) UU^\dagger \right)  dW\\
         &=\frac{\dim(\mathrm{V}_\lambda^{d-1})}{\dim(\mathrm{V}_\lambda^d)}\tr(UU^\dagger)\\
         &=\frac{\dim(\mathrm{V}_\mu^d)\dim(\mathrm{V}_\lambda^{d-1})}
         {\dim(\mathrm{V}_\lambda^d)},
    \end{align}
    where the third line follows from $\sum_{q\in\mathbb{Z}}Q_{\mu,q}=I_{\mathrm{V}_\mu^d}$, the fifth uses the $\mathrm{U}(d)$-invariance of $UU^\dagger$, and the sixth follows from Schur's lemma.

    Using the preceding identities, Eq.~\eqref{eq:bci_one_step} can be rewritten as
    \begin{align}
        \| B_0\|_{\mathrm{HS}}^2
        \leq\frac{\lambda_1+d+M-2}{\lambda_1+d+2M-2}
        \sum_{q\in\mathbb{Z}}\| B_q\|_{\mathrm{HS}}^2
    \end{align}
    or equivalently,
    \begin{align}
        \| B_0\|_{\mathrm{HS}}^2
        \leq\frac{\lambda_1+d+M-2}{M}
        \sum_{q\neq 0}\| B_q\|_{\mathrm{HS}}^2.
    \end{align}

    We now prove this inequality by establishing the stronger inequality:
    \begin{align}
        \| B_0\|_{\mathrm{HS}}^2
        \leq
        \frac{\lambda_1+d+M-2}{M}\| B_1\|_{\mathrm{HS}}^2.
        \label{eq:bci_b0_b1}
    \end{align} 
    To this end, we relate the operators $B_0$ and $B_1$, which lie in different $\mathrm{U}(1)$-weight subspaces, using infinitesimal Lie algebra actions.
    These actions in particular provide operators that increase the $\mathrm{U}(1)$ weight.
    To make this explicit, for a standard matrix unit $\ketbra{i}{j}\in\mathfrak{gl}_d(\mathbb C)$, let $E_\kappa^{i,j}$ denote the corresponding infinitesimal action on $\mathrm{V}_\kappa^d$.
    Consider a GT basis vector $\ket{v}$ with its $k$-th row given by $\kappa^k$ and let $\kappa^0=0$.
    Then, the diagonal operator $E_\kappa^{k,k}$ acts on $\ket{v}$ by
    \begin{align}
        \label{eq:ekk}
        E_\kappa^{k,k}\ket{v}
        =(|\kappa^k|-|\kappa^{k-1}|)\ket{v},
    \end{align}
    as follows from~\cite[Theorem 2.3]{molev2006gelfand}.
    In particular, for $k=d$, $E_\kappa^{d,d}$ acts on the weight-$q$ subspace as multiplication by $q$:
    \begin{align}
        E_\kappa^{d,d}Q_{\kappa,q}=qQ_{\kappa,q}.
    \end{align}
    Consequently, from the standard commutation relation $[E_\kappa^{d,d}, E_\kappa^{d,i}]=E_\kappa^{d,i}$ for $i< d$, it follows that
    \begin{align}
        E_\kappa^{d,d} E_\kappa^{d,i}Q_{\kappa,q}
        &=(E_\kappa^{d,i} E_\kappa^{d,d}+E_\kappa^{d,i})Q_{\kappa,q}\\
        &=(q+1)E_\kappa^{d,i}Q_{\kappa,q},\label{eq:weight_plus}
    \end{align}
    showing that $E_\kappa^{d,i}$ maps the weight-$q$ subspace into the weight-$(q+1)$ subspace and thus increases the $\mathrm{U}(1)$ weight by one.
    Using these properties, we proceed with the rest of the proof in three steps.
    \begin{lemma}
        \label{lem:step1}
        \begin{align}
            \|B_0\|^2_{\mathrm{HS}}\leq \frac{\lambda_1+d+M-2}{M(\mu_1+d-2)}\sum_{i=1}^{d-1}\|(I_{\mathrm{V}_\lambda^d}\otimes E_\nu^{d,i})B_0\|_{\mathrm{HS}}^2.
        \end{align}
    \end{lemma}
    \begin{proof}
        See Sec.~\ref{sec:l9proof}.
    \end{proof}

    \begin{lemma}
        \label{lem:step2}
        \begin{align}
        \sum_{i=1}^{d-1}
        \|(I_{\mathrm V_\lambda^d}\otimes E_\nu^{d,i})B_0\|_{\mathrm{HS}}^2
        =
        \sum_{i=1}^{d-1}
        \|B_1E_\mu^{d,i}Q_{\mu,0}\|_{\mathrm{HS}}^2.
        \end{align}
    \end{lemma}
    \begin{proof}
        See Sec.~\ref{sec:l10proof}.
    \end{proof}

    \begin{lemma}
        \label{lem:step3}
        \begin{align}
            \sum_{i=1}^{d-1}
            \|B_1E_\mu^{d,i}Q_{\mu,0}\|_{\mathrm{HS}}^2
            \leq
            (\mu_1+d-2)\|B_1\|_{\mathrm{HS}}^2.
        \end{align}
    \end{lemma}
    \begin{proof}
        See Sec.~\ref{sec:l11proof}.
    \end{proof}
    Combining the three lemmas yields Eq.~\eqref{eq:bci_b0_b1}, thereby completing the proof.

    \subsection{Proof of Lemma~\ref{lem:step1}}\label{sec:l9proof}
    We decompose the operator $B_0$ according to the $\mathrm U(d-1)$ branching rule and prove that
    \begin{align}
        \label{eq:step1}
        \|B_0\|^2_{\mathrm{HS}}\leq \frac{\lambda_1+d+M-2}{M(\mu_1+d-2)}\sum_{i=1}^{d-1}\|(I_{\mathrm{V}_\lambda^d}\otimes E_\nu^{d,i})B_0\|_{\mathrm{HS}}^2.
    \end{align}
    Decompose $B_0$ as
    \begin{align}
        B_0
        =\sum_{\substack{\gamma\prec\nu\\|\nu|-|\gamma|=0}}
        B_{0,\gamma},\qquad B_{0,\gamma}\coleq (I_{\mathrm{V}_\lambda^d}\otimes P_{\nu,\gamma}^{d,d-1})B_0.
    \end{align}
    Then, by the orthogonality of the projectors,
    \begin{align}
        \label{eq:step1-1}
        \|B_0\|^2_{\mathrm{HS}}
        =\sum_{\substack{\gamma\prec\nu\\|\nu|-|\gamma|=0}}\|B_{0,\gamma}\|^2_{\mathrm{HS}}.
    \end{align}

    We first establish the following auxiliary inequality:
    \begin{align}
        \label{eq:step1-2}
        \sum_{\substack{\gamma\prec\nu\\|\nu|-|\gamma|=0}}\|B_{0,\gamma}\|^2_{\mathrm{HS}}
        \leq \frac{\lambda_1+d+M-2}{M(\mu_1+d-2)}\sum_{\substack{\gamma\prec\nu\\|\nu|-|\gamma|=0}}\max\{\gamma_1,M\}\|B_{0,\gamma}\|^2_{\mathrm{HS}}.
    \end{align}
    Since each $\|B_{0,\gamma}\|_{\mathrm{HS}}^2$ is nonnegative, it suffices to show that, for every nonzero $B_{0,\gamma}$, the corresponding coefficient on the RHS is at least $1$, \textit{i.e.},
    \begin{align}
        \label{eq:one-coeff}
        \frac{\lambda_1+d+M-2}{M(\mu_1+d-2)}\max\{\gamma_1,M\}\geq 1.
    \end{align}
    This follows from combining several inequalities. 
    For a nonzero $B_{0,\gamma}$, the identity $B_{0,\gamma}=(J_\lambda\otimes J_{\gamma,\nu})K_\gamma J_\mu^\dagger$ implies that $K_\gamma$ is nonzero.
    Since $K_\gamma$ is a $\mathrm U(d-1)$-intertwiner with irreducible domain $\mathrm V_\mu^{d-1}$, it is injective.
    Thus, $\mathrm V_\mu^{d-1}$ occurs in $\mathrm V_\lambda^{d-1}\otimes\mathrm V_\gamma^{d-1}$.
    The CG rule~\cite[Appendix~A.8]{fulton2013representation} then gives
    \begin{align}
        \mu_1\leq\lambda_1+\gamma_1,
    \end{align}
    which allows us to show that
    \begin{align}
        \frac{\lambda_1+d+M-2}{M(\lambda_1+\gamma_1+d-2)}\max\{\gamma_1,M\}\geq 1
    \end{align}
    instead.
    If $\gamma_1\leq M$, the inequality follows from
    \begin{align}
        \frac{\lambda_1+d+M-2}{M(\lambda_1+\gamma_1+d-2)}\max\{\gamma_1,M\}
        \geq \frac{1}{M}\max\{\gamma_1,M\} = 1.
    \end{align}
    If $\gamma_1>M$,
    \begin{align}
        (\lambda_1+d+M-2)\max\{\gamma_1,M\}-M(\lambda_1+\gamma_1+d-2)
        &=(\gamma_1-M)(\lambda_1+d-2)\geq 0
    \end{align}
    follows from $\lambda_1+d-2\geq 0$, which implies the inequality.
    Therefore, we obtain Eq.~\eqref{eq:one-coeff} and consequently Eq.~\eqref{eq:step1-2}.

    We now use Lie algebra representation theory to prove the following inequality:
    \begin{align}
        \label{eq:step1-3}
        \max\{\gamma_1,M\}\|B_{0, \gamma}\|^2_{\mathrm{HS}}\leq\sum_{i=1}^{d-1}\|(I_{\mathrm{V}_\lambda^d}\otimes E_\nu^{d,i})B_{0, \gamma}\|_{\mathrm{HS}}^2
    \end{align}
    for $\gamma\prec \nu$ with $|\nu|-|\gamma|=0$.
    Rewriting this in terms of trace gives the following equivalent inequality:
    \begin{align}
        \max\{\gamma_1,M\}\tr(B_{0,\gamma}^\dagger B_{0,\gamma})
        &\leq \tr\left(B_{0,\gamma}^\dagger\left(I_{\mathrm{V}_\lambda^d}\otimes \sum_{i=1}^{d-1}E_\nu^{i,d}E_\nu^{d,i}\right)B_{0,\gamma}\right)\\
        &=\tr\left(B_{0,\gamma}^\dagger\left(I_{\mathrm{V}_\lambda^d}\otimes P_{\nu,\gamma}^{d,d-1}\left(\sum_{i=1}^{d-1}E_\nu^{i,d}E_\nu^{d,i}\right)P_{\nu,\gamma}^{d,d-1}\right)B_{0,\gamma}\right),
    \end{align}
    where we used $(E_\nu^{d,i})^\dagger=E_\nu^{i,d}$ and the definition of $B_{0,\gamma}$.
    Hence, it suffices to show that
    \begin{align}
        \label{eq:gamma1-bound}
        \max\{\gamma_1,M\} P_{\nu,\gamma}^{d,d-1} \leq P_{\nu,\gamma}^{d,d-1}\left(\sum_{i=1}^{d-1}E_\nu^{i,d}E_\nu^{d,i}\right)P_{\nu,\gamma}^{d,d-1}.
    \end{align}

    We prove this operator inequality by showing that the RHS acts as a scalar on the image of $P_{\nu,\gamma}^{d,d-1}$.
    By the branching rule, the image of $P_{\nu,\gamma}^{d,d-1}$ is a $\mathrm U(d-1)$-invariant subspace isomorphic to $\mathrm V_\gamma^{d-1}\otimes\mathbb C_{|\nu|-|\gamma|}$.
    Thus, the projector $P_{\nu,\gamma}^{d,d-1}$ commutes with the $\mathrm U(d-1)$ action.
    The summation term also commutes with the $\mathrm{U}(d-1)$ action as follows:
    \begin{align}
        \left[E_\nu^{j,k}, \sum_{i=1}^{d-1}E_\nu^{i,d}E_\nu^{d,i}\right]
        &=\sum_{i=1}^{d-1}[E_\nu^{j,k}, E_\nu^{i,d}E_\nu^{d,i}]\\
        &=\sum_{i=1}^{d-1}([E_\nu^{j,k}, E_\nu^{i,d}]E_\nu^{d,i}+E_\nu^{i,d}[E_\nu^{j,k}, E_\nu^{d,i}])\\
        &=\sum_{i=1}^{d-1}(\delta_{k,i}E_{\nu}^{j,d}E_{\nu}^{d,i}-\delta_{j, i}E_\nu^{i, d}E_\nu^{d,k})\\
        &=0\label{eq:chobap}
    \end{align}
    for $j,k<d$, where $\delta_{\cdot,\cdot}$ is the Kronecker delta, and the third line follows from the standard commutation relation $[E_\nu^{a,b}, E_\nu^{c,d}]=\delta_{b,c}E_\nu^{a,d}-\delta_{d,a}E_\nu^{c,b}$.
    Therefore, by Schur's lemma,
    \begin{align}
        P_{\nu,\gamma}^{d,d-1}
        \left(\sum_{i=1}^{d-1}E_\nu^{i,d}E_\nu^{d,i}\right)
        P_{\nu,\gamma}^{d,d-1}
        =
        c\,P_{\nu,\gamma}^{d,d-1}
    \end{align}
    for some scalar $c$.

    It therefore remains to show that $c\geq\gamma_1$ and $c\geq M$.
    To first show $c\geq\gamma_1$, choose the following unit GT basis vector
    \begin{align}
        \ket{v}=
        \left(
        \begin{array}{c}
            \gtbox{3.5cm}{\nu}\\[0.35em]
            \gtbox{2.5cm}{\gamma}\\[0.35em]
            \vdots\\[0.35em]
            \gtbox{0.5cm}{(\gamma_1)}
        \end{array}
        \right).
    \end{align}
    Then, we obtain
    \begin{align}
        c
        &=\bra{v}\sum_{i=1}^{d-1}E_\nu^{i,d}E_\nu^{d,i}\ket{v}\\
        &=\sum_{i=1}^{d-1}\|E_\nu^{d,i}\ket{v}\|^2\\
        &\geq\|E_\nu^{d,1}\ket{v}\|^2\\
        &=\bra{v}E_\nu^{1,d}E_\nu^{d,1}\ket{v}\\
        &=\bra{v}(E_\nu^{d,1}E_\nu^{1,d}+E_\nu^{1,1}-E_\nu^{d,d})\ket{v}\\
        &=\|E_\nu^{1,d}\ket{v}\|^2+\gamma_1-(|\nu|-|\gamma|)\\
        &\geq \gamma_1,
    \end{align}
    where we used $(E_\nu^{d,i})^\dagger=E_\nu^{i,d}$, the commutation relation $[E_\nu^{1,d},E_\nu^{d,1}]=E_\nu^{1,1}-E_\nu^{d,d}$, Eq.~\eqref{eq:ekk}, and $|\nu|-|\gamma|=0$.

    We then show that $c\geq M$.
    From the commutation relation, we obtain
    \begin{align}
        c
        &=\bra{v}\sum_{i=1}^{d-1}E_\nu^{i,d}E_\nu^{d,i}\ket{v}\\
        &=\sum_{i=1}^{d-1}\bra{v}(E_\nu^{d,i}E_\nu^{i,d}+E_\nu^{i,i}-E_\nu^{d,d})\ket{v}\\
        &=\sum_{i=1}^{d-1}\|E_\nu^{i,d}\ket{v}\|^2+|\gamma|-(d-1)(|\nu|-|\gamma|)\\
        &=\sum_{i=1}^{d-1}\|E_\nu^{i,d}\ket{v}\|^2+M\\
        &\geq M,
    \end{align}
    where we also used $(E_\nu^{i,d})^\dagger=E_\nu^{d,i}$, $|\nu|-|\gamma|=0$, Eq.~\eqref{eq:ekk}, and the size matching condition $|\nu|=|\mu|-|\lambda|=M$ in Lemma~\ref{lem:sizematching}.
    This proves Eq.~\eqref{eq:gamma1-bound}, and consequently Eq.~\eqref{eq:step1-3}.

    Combining Eqs.~\eqref{eq:step1-1}, \eqref{eq:step1-2},
    and \eqref{eq:step1-3} yields
    \begin{align}
        \|B_0\|^2_{\mathrm{HS}}\leq \frac{\lambda_1+d+M-2}{M(\mu_1+d-2)} \sum_{\substack{\gamma\prec\nu\\|\nu|-|\gamma|=0}}\sum_{i=1}^{d-1}\|(I_{\mathrm{V}_\lambda^d}\otimes E_\nu^{d,i})B_{0, \gamma}\|_{\mathrm{HS}}^2.
    \end{align}
    We complete the proof by showing that
    \begin{align}
        \sum_{\substack{\gamma\prec\nu\\|\nu|-|\gamma|=0}}\sum_{i=1}^{d-1}\|(I_{\mathrm{V}_\lambda^d}\otimes E_\nu^{d,i})B_{0, \gamma}\|_{\mathrm{HS}}^2=\sum_{i=1}^{d-1}
        \|(I_{\mathrm V_\lambda^d}\otimes E_\nu^{d,i})B_0\|_{\mathrm{HS}}^2.
        \label{eq:step1-4}
    \end{align}
    As shown in Eq.~\eqref{eq:chobap}, the operator $\sum_{i=1}^{d-1}E_\nu^{i,d}E_\nu^{d,i}$ commutes with the $\mathrm U(d-1)$ action.
    Hence, for distinct $\gamma,\gamma'\prec\nu$,
    \begin{align}
        P_{\nu,\gamma}^{d,d-1}
        \left(\sum_{i=1}^{d-1}E_\nu^{i,d}E_\nu^{d,i}\right)
        P_{\nu,\gamma'}^{d,d-1}
        =0.
    \end{align}
    Indeed, the LHS is a $\mathrm U(d-1)$-intertwiner
    from $\mathrm V_{\gamma'}^{d-1}$ to $\mathrm V_\gamma^{d-1}$ up to isomorphism, and therefore
    vanishes by Schur's lemma for $\gamma\neq\gamma'$.
    From the definition of $B_{0,\gamma}$,
    we therefore have
    \begin{align}
        &\sum_{i=1}^{d-1}
        \tr\left(
            B_{0,\gamma}^\dagger
            (I_{\mathrm V_\lambda^d}\otimes E_\nu^{i,d}E_\nu^{d,i})
            B_{0,\gamma'}
        \right)
        =0.
    \end{align}
    This implies Eq.~\eqref{eq:step1-4}, completing the proof.

    \subsection{Proof of Lemma~\ref{lem:step2}}\label{sec:l10proof}
    We show that 
    \begin{align}
        \sum_{i=1}^{d-1}
        \|(I_{\mathrm V_\lambda^d}\otimes E_\nu^{d,i})B_0\|_{\mathrm{HS}}^2
        =
        \sum_{i=1}^{d-1}
        \|B_1E_\mu^{d,i}Q_{\mu,0}\|_{\mathrm{HS}}^2
    \end{align}
    by combining relations for the infinitesimal Lie algebra actions.
    First, we have
    \begin{align}
        \label{eq:step2-1}
        E_\nu^{d,i}Q_{\nu,0}=Q_{\nu,1}E_\nu^{d,i}Q_{\nu,0},
    \end{align}
    for $i<d$, which follows from the fact that $E_\nu^{d,i}$ increases the $\mathrm{U}(1)$-weight by one, as shown in Eq.~\eqref{eq:weight_plus}.
    The same weight-raising property also gives
    \begin{align}
        \label{eq:step2-2}
        Q_{\lambda,0}E_\lambda^{d,i}=0,
    \end{align}
    because $\mathrm V_\lambda^d$ has no weight-$(-1)$ subspace.
    Finally, the intertwining property of $U$ gives
    \begin{align}
        \label{eq:step2-3}
        UE^{d,i}_{\mu} 
        =
        (E^{d,i}_{\lambda}\otimes I_{\mathrm{V}_\nu^d}+I_{\mathrm{V}_\lambda^d}\otimes E^{d,i}_{\nu})U,
    \end{align}
    for $i<d$, obtained by differentiating $U\varphi_\mu(g) =(\varphi_\lambda(g)\otimes\varphi_\nu(g))U$ and extending the resulting relation complex-linearly.
    Using these identities, we obtain
    \begin{align}
        (I_{\mathrm{V}_\lambda^d}\otimes E_\nu^{d,i})B_0
        &=(Q_{\lambda,0}\otimes E_\nu^{d,i}Q_{\nu,0})UQ_{\mu,0}\\
        &=(Q_{\lambda,0}\otimes Q_{\nu,1})(I_{\mathrm{V}_{\lambda}^d}\otimes E_\nu^{d,i})UQ_{\mu,0}\\
        &=(Q_{\lambda,0}\otimes Q_{\nu,1})(UE^{d,i}_{\mu}- (E^{d,i}_{\lambda}\otimes I_{\mathrm{V}_\nu^d})U)Q_{\mu,0}\\
        &=(Q_{\lambda,0}\otimes Q_{\nu,1})UE^{d,i}_{\mu}Q_{\mu,0}\\
        &=(Q_{\lambda,0}\otimes Q_{\nu,1})UQ_{\mu,1}E^{d,i}_{\mu}Q_{\mu,0}\\
        &=B_1E^{d,i}_{\mu}Q_{\mu,0},
    \end{align}
    where the first and last lines follow from the definition of $B_q$, the second and fifth follow from Eq.~\eqref{eq:step2-1} and its analogue for $\mu$, respectively, the third follows from Eq.~\eqref{eq:step2-3}, and the fourth follows from Eq.~\eqref{eq:step2-2}.

    \subsection{Proof of Lemma~\ref{lem:step3}}\label{sec:l11proof}
    We show that
    \begin{align}
            \sum_{i=1}^{d-1}
            \|B_1E_\mu^{d,i}Q_{\mu,0}\|_{\mathrm{HS}}^2
            \leq
            (\mu_1+d-2)\|B_1\|_{\mathrm{HS}}^2.
    \end{align}
        
    We first rewrite the LHS as
    \begin{align}
        \mathrm{LHS}
        &=\sum_{i=1}^{d-1}
        \tr(Q_{\mu,0}E_\mu^{i,d}B_1^\dagger B_1E_\mu^{d,i}Q_{\mu,0})\\
        &=\sum_{i=1}^{d-1}
        \tr(B_1^\dagger B_1E_\mu^{d,i}Q_{\mu,0}E_\mu^{i,d})\\
        &=\sum_{i=1}^{d-1}
        \tr(B_1^\dagger B_1E_\mu^{d,i}E_\mu^{i,d}Q_{\mu,1})\\
        &=\tr\left(
            B_1^\dagger B_1
            \sum_{i=1}^{d-1}E_\mu^{d,i}E_\mu^{i,d}
        \right),
    \end{align}
    where the third line follows from $Q_{\mu,0}E_\mu^{i,d}=E_\mu^{i,d}Q_{\mu,1}$, and the last line follows from $B_1=B_1Q_{\mu,1}$.
    Since $B_1^\dagger B_1$ is positive semidefinite and supported on the weight-one subspace $Q_{\mu,1}\mathrm V_\mu^d$, it suffices to prove the following operator inequality on that subspace:
    \begin{align}
        \left.\sum_{i=1}^{d-1}E_\mu^{d,i}E_\mu^{i,d}\right|_{Q_{\mu,1}\mathrm{V}_\mu^d}\leq (\mu_1+d-2)I_{Q_{\mu,1}\mathrm{V}_\mu^d}.
    \end{align}

    We prove this by reducing it to a scalar inequality using Schur's lemma.
    To this end, we first show that $\sum_{i=1}^{d-1}E_\mu^{d,i}E_\mu^{i,d}$ commutes with the $\mathrm U(d-1)$ action.
    As shown in Eq.~\eqref{eq:chobap}, $\sum_{i=1}^{d-1}E_\mu^{i,d}E_\mu^{d, i}$ commutes with the $\mathrm{U}(d-1)$ action.
    Moreover, from the commutation relation $[E_\mu^{i,d},E_\mu^{d,i}]=E_\mu^{i,i}-E_\mu^{d,d}$, we have
    \begin{align}
        \sum_{i=1}^{d-1}E_\mu^{d,i}E_\mu^{i,d}
        &=
        \sum_{i=1}^{d-1}E_\mu^{i,d}E_\mu^{d,i}
        -
        \sum_{i=1}^{d-1}
        (E_\mu^{i,i}-E_\mu^{d,d}).
    \end{align}
    Since both $\sum_{i=1}^{d-1}E_\mu^{i,i}$ and $E_\mu^{d,d}$ commute with the $\mathrm U(d-1)$ action, it follows that $\sum_{i=1}^{d-1}E_\mu^{d,i}E_\mu^{i,d}$ also commutes with the $\mathrm U(d-1)$ action.

    We then decompose the weight-one subspace $Q_{\mu,1}\mathrm V_\mu^d$. 
    By the branching rule in Eq.~\eqref{eq:u_restrict}, the weight-one subspace decomposes as
    \begin{align}
        Q_{\mu,1}\mathrm V_\mu^d
        \simeq
        \bigoplus_{\substack{\gamma\prec\mu\\|\mu|-|\gamma|=1}}
        \mathrm V_\gamma^{d-1}\otimes\mathbb C_1
    \end{align}
    as a $\mathrm U(d-1)\times\mathrm U(1)$ representation.
    The staircases $\gamma\prec\mu$ satisfying $|\mu|-|\gamma|=1$ are precisely
    $\gamma=\mu-e_s$, where $1\leq s\leq d-1$ satisfies
    $\mu_s>\mu_{s+1}$.
    Thus, as a $\mathrm U(d-1)$ representation,
    \begin{align}
        Q_{\mu,1}\mathrm V_\mu^d
        \simeq
        \bigoplus_s\mathrm V_{\mu-e_s}^{d-1},
    \end{align}
    where $e_s$ denotes the length-$(d-1)$ tuple whose $s$-th entry is one and
    whose remaining entries are zero, and the direct sum is over the admissible indices $s$ specified above.
    
    Hence, Schur's lemma gives, for each admissible $s$,
    \begin{align}
        \left.
        \sum_{i=1}^{d-1}E_\mu^{d,i}E_\mu^{i,d}
        \right|_{\mathrm V_{\mu-e_s}^{d-1}}
        =
        c_s I_{\mathrm V_{\mu-e_s}^{d-1}}
    \end{align}
    for some scalar $c_s$.
    Therefore, it suffices to show that $c_s\leq\mu_1+d-2$ for every admissible index $s$.

    A useful observation is that the operator $\sum_{i,j=1}^{d}E_\mu^{i,j}E_\mu^{j,i}$ is the quadratic Casimir operator, which is well studied in representation theory.
    We compute $c_s$ using the formula for the eigenvalue of the quadratic Casimir operator given in Sec.~2.4 of Ref.~\cite{molev2006gelfand}:
    \begin{align}
        \left.\sum_{i,j=1}^{d}E_\mu^{i,j}E_\mu^{j,i}\right|_{\mathrm{V}_\mu^d}=\left(\sum_{k=1}^d\mu_k(\mu_k+d-2k+1)\right)I_{\mathrm{V}_\mu^d}.
    \end{align}
    Applying the corresponding formulas for $\mathrm U(d)$ and $\mathrm U(d-1)$ to the subspace $\mathrm V_{\mu-e_s}^{d-1}$, we obtain
    \begin{align}
        \left.\sum_{i,j=1}^{d}E_\mu^{i,j}E_\mu^{j,i}\right|_{\mathrm{V}_{\mu-e_s}^{d-1}}
        &=\left(\sum_{k=1}^d\mu_k(\mu_k+d-2k+1)\right)I_{\mathrm{V}_{\mu-e_s}^{d-1}},\\
        \left.\sum_{i,j=1}^{d-1}E_\mu^{i,j}E_\mu^{j,i}\right|_{\mathrm{V}_{\mu-e_s}^{d-1}}
        &=\left(\sum_{k=1}^{d-1}(\mu_k-\delta_{k,s})(\mu_k-\delta_{k,s}+d-2k)\right)I_{\mathrm{V}_{\mu-e_s}^{d-1}}.
    \end{align}
    It now remains to express $c_s$ in terms of these two Casimir operators.
    We have
    \begin{align}
        \sum_{i,j=1}^{d}E_\mu^{i,j}E_\mu^{j,i}-\sum_{i,j=1}^{d-1}E_\mu^{i,j}E_\mu^{j,i}
        &=(E_\mu^{d,d})^2+\sum_{i=1}^{d-1}E_\mu^{d,i}E_\mu^{i,d}+\sum_{i=1}^{d-1}E_\mu^{i,d}E_\mu^{d,i}\\
        &=(E_\mu^{d,d})^2+2\sum_{i=1}^{d-1}E_\mu^{d,i}E_\mu^{i,d}+\sum_{i=1}^{d-1}(E_\mu^{i,i}-E_\mu^{d,d}),
    \end{align}
    where the second line follows from the commutation relation $[E_\mu^{i,d},E_\mu^{d,i}]=E_\mu^{i,i}-E_\mu^{d,d}$.
    On the subspace $\mathrm V_{\mu-e_s}^{d-1}$, Eq.~\eqref{eq:ekk} gives
    \begin{align}
        E_\mu^{d,d}
        &=I_{\mathrm V_{\mu-e_s}^{d-1}},\\
        \sum_{i=1}^{d-1}E_\mu^{i,i}
        &=(|\mu|-1)I_{\mathrm V_{\mu-e_s}^{d-1}}.
    \end{align}
    Hence, we obtain
    \begin{align}
        \left. \left(\sum_{i,j=1}^{d}E_\mu^{i,j}E_\mu^{j,i}-\sum_{i,j=1}^{d-1}E_\mu^{i,j}E_\mu^{j,i} \right) \right|_{\mathrm{V}^{d-1}_{\mu - e_s}}
            &= \left(\sum_{k=1}^{d}\mu_k(\mu_k+d-2k+1) - \sum_{k=1}^{d-1}(\mu_k-\delta_{k,s})(\mu_k-\delta_{k,s}+d-2k)\right) I_{\mathrm{V}_{\mu - e_s}^{d-1}} \\
            &= \left(2c_s+|\mu|-d+1\right) I_{\mathrm{V}^{d-1}_{\mu-e_s}}\\
            &= \left(|\mu|+2\mu_s+d-2s-1 \right)  I_{\mathrm{V}^{d-1}_{\mu-e_s}}. 
    \end{align}
    Therefore, 
    \begin{align}
        c_s=\mu_s+d-s-1
        \leq \mu_1+d-2,
    \end{align}
    which completes the proof.
    
\bibliography{reference}